\newtheorem{theorem}{Theorem}
\theoremstyle{definition}
\theoremstyle{definition}
\newtheorem{Example}{Example}
\newtheorem{remark}{Remark}
\theoremstyle{definition}
\newtheorem{lemma}{Lemma}
\date{}
\title{Degrees of Freedom of Interference Networks with Transmitter-Side Caches}
\begin{document}

\author{\thanks{Antonious M. Girgis is with Wireless Intelligent Networks Center (WINC), Nile University, Cairo. Ozgur Ercetin is with Faculty of Engineering and Natural Sciences, Sabanci University, Istanbul, Turkey. Mohammed Nafie is with Wireless Intelligent Networks Center (WINC), Nile University, Cairo, Egypt and the Dept. of EECE, Faculty of Engineering, Cairo University, Giza, Egypt. Tamer ElBatt is with the Depart. of Computer Science and Engineering, American University, Cairo, Egypt.} 
\IEEEauthorblockN{Antonious M. Girgis, Ozgur Ercetin, Mohammed Nafie, and Tamer ElBatt}}

\maketitle
\vspace{-15mm}

\begin{abstract}
This paper studies cache-aided interference networks with arbitrary number of transmitters and receivers, whereby each transmitter has a cache memory of finite size. Each transmitter fills its cache memory from a content library of files in the placement phase. In the subsequent delivery phase, each receiver requests one of the library files, and the transmitters are responsible for delivering the requested files from their caches to the receivers. The objective is to design schemes for the placement and delivery phases to maximize the sum degrees of freedom (sum-DoF) which expresses the capacity of the interference network at the high signal-to-noise ratio regime. Our work mainly focuses on a commonly used uncoded placement strategy. We provide an information-theoretic bound on the sum-DoF for this placement strategy. We demonstrate by an example that the derived bound is tighter than the bounds existing in the literature for small cache sizes. We propose a novel delivery scheme with a higher achievable sum-DoF than those previously given in the literature. The results reveal that the reciprocal of sum-DoF decreases linearly as the transmitter cache size increases. Therefore, increasing cache sizes at transmitters translates to increasing the sum-DoF and, hence, the capacity of the interference networks. 

\end{abstract}
\vspace{-7mm}
\begin{IEEEkeywords}
Coded caching, Interference networks, Degrees of freedom, Interference management.
\end{IEEEkeywords}

\vspace{-5mm}
\section{Introduction}
 Wireless networks are experiencing an exponential growth of user traffic load due to the rapid proliferation of wireless devices. In particular, video content contributes to more than half of the overall data traffic, and it is expected that it will grow to about $75$ percent in the next few years~\cite{misc}. Caching of popular content close to user terminals has the potential to tackle this dramatic growth. In particular, on-demand video content is usually non-real-time, i.e., it is available prior to the transmission. The main idea of caching is to exploit the under-utilized network resources during off-peak hours by prefetching the most popular content in the edge servers close to users without any knowledge of the future user demands. During the peak-hours when the network is congested, the caches at the edge servers can be exploited to partly serve the user requests without accessing the central server to improve the system performance as well as the users' experience (See~\cite{el2010proactive,tadrous2016optimal,golrezaei2012femtocaching,gregori2015joint,girgis2016proactive,liu2016energy} and references therein). It was shown in~\cite{el2010proactive} that serving a portion of requests proactively and storing the data in the user caches yield a significant reduction in the outage probability. In~\cite{tadrous2016optimal}, the authors proposed a proactive caching strategy under the objective of minimizing the transmission costs. In~\cite{golrezaei2012femtocaching}, the authors developed distributed caching networks to minimize the transmission latency with constraints on the capacity of the backhaul links. The work in~\cite{gregori2015joint,girgis2016proactive,liu2016energy} aim to design caching systems under the objective of maximizing the energy efficiency of the network.
 
 Recently, Maddah-Ali and Niesen introduced an information-theoretic approach, called \textit{coded caching}, for an error-free broadcast channel in which each user is equipped with an isolated cache memory~\cite{maddah2014fundamental}. It was shown that the network traffic can be significantly reduced by jointly designing the content placement and delivery phases which in turn generates multicast coding opportunities in the delivery phase. Coded caching has received considerable attention for several settings~\cite{yu2017exact,maddah2015decentralized,niesen2017coded,ji2014average,zhang2015coded,wang2015coded,amiri2016decentralized,pedarsani2016online,shariatpanahi2016multi,karamchandani2016hierarchical}. The work in \cite{yu2017exact} characterized in closed form the memory-rate tradeoff for the coded caching problem introduced in~\cite{maddah2014fundamental} under uncoded placement. Decentralized coded caching was introduced in~\cite{maddah2015decentralized}, where each user randomly stores some bits from each file independently of each other. In~\cite{niesen2017coded,ji2014average,zhang2015coded}, the authors studied the coded caching problem for an arbitrary file popularity. The work in~\cite{wang2015coded} and~\cite{amiri2016decentralized} studied decentralized coded caching under heterogeneous cache sizes at the users. Additionally, the concept of coded caching was extended for online caching systems in~\cite{pedarsani2016online}, multi-server networks in~\cite{shariatpanahi2016multi}, and hierarchical networks in~\cite{karamchandani2016hierarchical}.

The role of caches at the transmitters of interference networks was first studied in~\cite{maddah2015cache}. It was shown that the caches at transmitters can improve the sum degrees of freedom (sum-DoF) of the network by allowing cooperation between transmitters for interference mitigation. However, in~\cite{maddah2015cache}, the authors considered only the case of three transmitters and three receivers. A lower bound on the transmission latency of cache-aided interference networks was derived in~\cite{sengupta2016cache}, where the authors propose the normalized delivery time (NDT) as a performance metric which is proportional to the inverse of the sum-DoF. In~\cite{tandon2016cloud}, the authors derived the exact characterization of the NDT of Fog radio access networks (F-RAN) with caches equipped at each transmitter. The extension for an arbitrary number of transmitters and receivers was studied in~\cite{sengupta2016cloud}. The authors in~\cite{azimi2016fundamental} investigated the NDT of the binary-fading interference channel with two receivers served by a small-cell base station with a finite-size cache and a macro base station. The problem of interference networks with caches at both the transmitters and the receivers was studied in~\cite{girgis2017decentralized,naderializadeh2017fundamental,hachem2016degrees,xu2017fundamental}. In~\cite{girgis2017decentralized}, the NDT of an F-RAN with two transmitters and an arbitrary number of receivers was characterized under a decentralized content placement scheme, where it was shown that the proposed coding scheme achieves a performance comparable to the derived lower bound. The authors in~\cite{naderializadeh2017fundamental} focused on designing one-shot linear delivery schemes for cache-aided interference networks in which the channel extension is not allowed. In~\cite{hachem2016degrees}, the approximate characterization of the sum-DoF for interference channels with caches at both the transmitters and receivers was investigated wherein the authors proposed a separation strategy between the physical and the network layers. In~\cite{xu2017fundamental}, the authors proposed an order-optimal delivery scheme for minimizing the NDT of cache-aided interference networks.

\begin{figure}[t]
\centering{\includegraphics[scale=0.4,trim=4 4 4 4,clip]{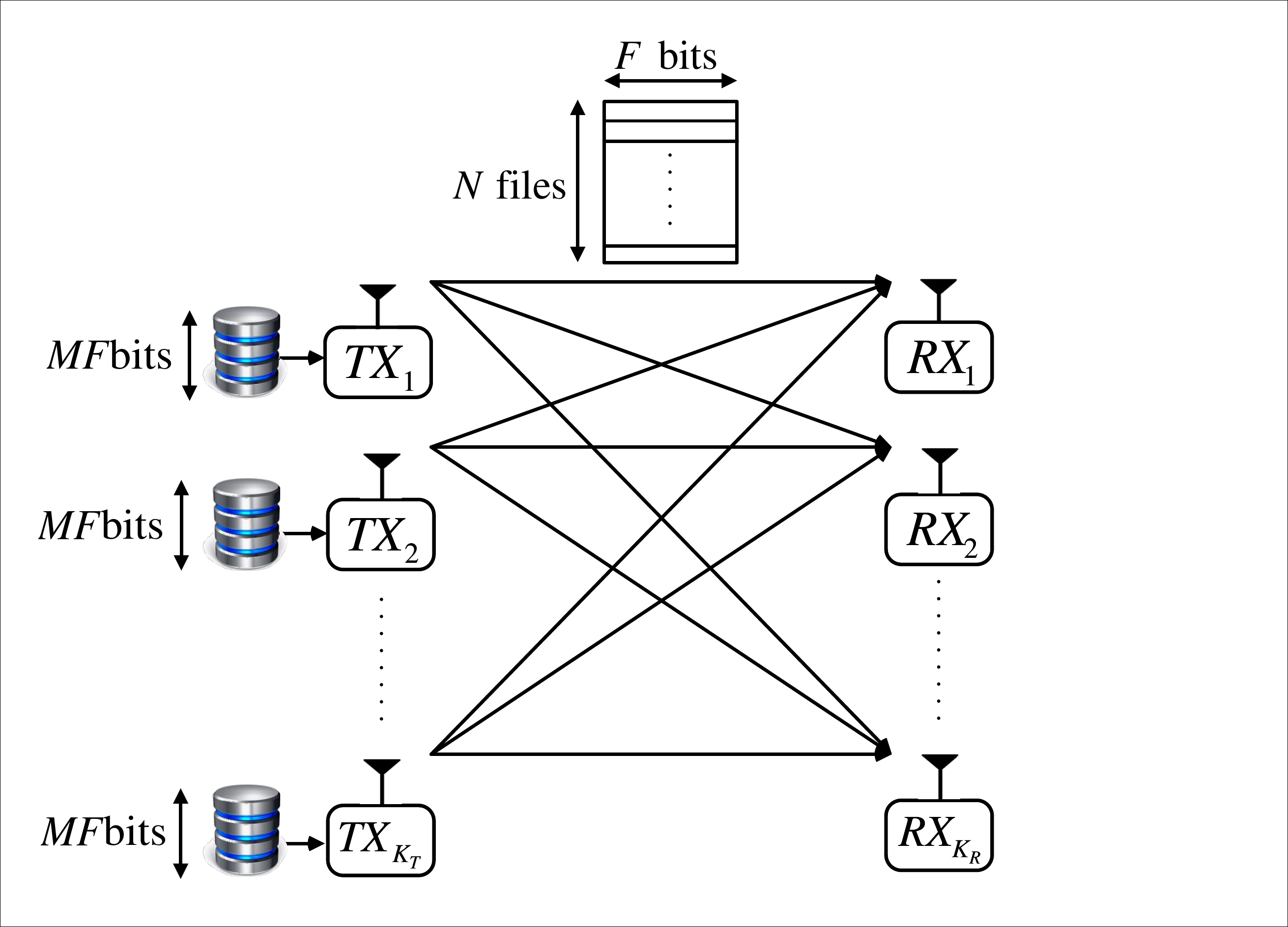}}
\caption{Cache-aided interference network with $K_T$ transmitters and $K_R$ receivers.}
\label{fig1}\vspace{-4mm}
\end{figure}

In this paper, we consider a $K_T\times K_R$ interference network with a library of $N$ files, where each transmitter is equipped with a cache memory of size $MF$ bits. We study the fundamental limits on the sum-DoF of interference networks as a function of the transmitter cache size $M$, under an uncoded placement strategy that was originally introduced in~\cite{maddah2015cache} and~\cite{naderializadeh2017fundamental}. Our main contributions in this work are as follows:

$\bullet$  We introduce a new communication paradigm in the delivery phase called \textit{cooperative $X$-networks} in which each $\frac{K_TM}{N}$ transmitters has a dedicated message for every receiver. We derive an information-theoretic bound on the sum-DoF for such network by using genie-aided, cut-set arguments. Hence, there is no delivery scheme for cache-aided interference networks achieving a sum-DoF higher than that of the derived upper bound using such uncoded caching scheme. 

$\bullet$ This derived bound gives important insights on the sum-DoF of the cache-aided interference networks. We show that the delivery scheme in~\cite{xu2017fundamental} is optimal in the case $\tau=K_R-1$, $\tau=\frac{K_TM}{N}$. Additionally, the derived bound is tighter than the upper bound in~\cite{sengupta2016cache} (developed for arbitrary caching schemes) for small cache sizes. Hence, by using an uncoded caching scheme, the bound in~\cite{sengupta2016cache} is not achievable in general. 

$\bullet$ We propose a novel delivery scheme based on zero forcing (ZF) and interference alignment (IA) techniques. The main idea of the proposed delivery scheme is to first schedule the requested bits into groups such that each group of bits (messages) would be sent in the same transmission blocks. The scheduling strategy is designed in such a way that each transmitter has a dedicated message to every receiver. In addition, the messages assigned to a transmitter is available as a side information at $\frac{K_TM}{N}-1$ consecutive transmitters, i.e., those are called cooperative (cognitive) transmitters. The benefits of cognitive messages on the sum-DoF have been studied in~\cite{lapidoth2007cognitive,annapureddy2012degrees,huang2009degrees} for interference networks, and in~\cite{jafar2008degrees} for multiple-input-multiple-output (MIMO) $X$-networks with two transmitters and two receivers. Therefore, in each transmission block, we design two-layer precoding matrices at each transmitter. The first precoding layer is designed based on ZF to leverage the partial cooperation between transmitters in order to null out some interference signals at each receiver, while, the second precoding layer is designed based on IA to align the remaining interference signals at each receiver into smaller dimensions.


$\bullet$ Although, the work in~\cite{naderializadeh2017fundamental,hachem2016degrees,xu2017fundamental} have studied a similar setup, we show that our proposed coding scheme achieves a higher sum-DoF than those of state-of-the-art schemes, where the differences between our results and those in~\cite{naderializadeh2017fundamental,hachem2016degrees,xu2017fundamental} are discussed throughout the paper. Our analytical analysis indicates that the reciprocal of the sum-DoF decreases linearly as the transmitter cache size increases. Moreover, we show that the achievable sum-DoF is within a multiplicative factor of $2$ from the upper bound which is derived without any restrictions on the caching and delivery schemes, independent of system parameters.

The rest of the paper is organized as follows. In Section~\ref{Sec1}, we introduce the system model and the problem formulation. In Section~\ref{Sec}, the main results of the paper are presented. A rigorous comparison between our coding scheme and the state-of-the-art schemes is also presented in Section~\ref{Sec}. Section~\ref{Sec2} describes the uncoded placement. The converse results for cache-aided interference networks with this uncoded placement strategy is presented in Section~\ref{Sec2}. We prove the achievable bound of the sum-DoF for a generic cache-aided interference network in Section~\ref{Sec3}. Finally, the paper is concluded in Section~\ref{Sec4}. Some technical proofs are relegated to appendices.
 
\vspace{-5mm} 
 
\section{System Model}~\label{Sec1} \vspace{-5mm}

\textbf{Notations:} The set of natural numbers and complex numbers are denoted by $\mathbb{N}^{+}$ and $\mathbb{C}$, respectively. For a square matrix $\mathbf{M}$, we use $\left(\mathbf{M}\right)^{-1}$ and $\mathit{det}\mathbf{M}$ to denote the inverse and the determinant, respectively. $\left(•\right)^{T}$ denotes transpose. We use calligraphic symbols for sets, e.g., $\mathcal{S}$. $|\mathcal{S}|$ denotes the cardinality of set $\mathcal{S}$. $\left[K\right]$ denotes the set of integers $\lbrace 1,\ldots,K\rbrace$, and $\left[a:b\right]$ denotes the set of integer numbers between $a$ and $b$: $\lbrace a,\ldots,b\rbrace$ for $b\geq a$. For set $\left[i:j\right]$ of transmitters, the indices are taken to modulo $K_T$ which is the total number of transmitters such that $\left[i:j\right]\subseteq\left[K_T\right]$. Similarly, for set $\left[i:j\right]$ of receivers, the indices are taken to modulo $K_R$ which is the total number of receivers such that $\left[i:j\right]\subseteq\left[K_R\right]$.

We consider an interference network of $K_T$ transmitters connected to $K_R$ receivers over a time-varying Gaussian channel as depicted in Figure~\ref{fig1}. There is a content library of $N$ files, $\mathcal{W}\triangleq\left\{W_1,\ldots,W_N\right\}$, each of size $F$ bits, where each file $W_f\in\mathcal{W}$ is chosen independently and uniformly from $\left[2^F\right]$ at random. Each transmitter $\text{TX}_i$, $i\in\left[K_T\right]$, has a local cache memory $Z_i$ of size $MF$ bits, where $M$ refers to the memory size in files, where $M\leq N$.

The system operates in two separate phases, a \textit{placement phase} and a \textit{delivery phase}. In the placement phase, the transmitters have access to the content library $\mathcal{W}$, and hence, each transmitter fills its cache memory as an arbitrary function of the content library $\mathcal{W}$ under its cache size constraint. We emphasize that the caching decisions are taken without any prior knowledge of the future receiver demands and channel coefficients between the transmitters (TXs) and the receivers (RXs).

 In the delivery phase, receiver $\text{RX}_j$ requests a file $W_{d_j}$ out of $N$ files of the library. We consider $\mathbf{d}=\left[d_1,\ldots,d_{K_R}\right]\in\left[N\right]^{K_R}$ as the vector of receiver demands. The transmitters are aware of all receiver demands $\mathbb{d}$. Thus, transmitter $\text{TX}_i$, $i\in\left[K_T\right]$, responds by sending a codeword $\mathbf{X}_i\triangleq\left(X_i\left(t\right)\right)_{t=1}^{T}$ of block length $T$ over the interference channel, where $X_i\left(t\right)\in\mathbb{C}$ is the transmitted signal of transmitter $\text{TX}_i$ at time $t\in\left[T\right]$. We impose an average transmit power constraint over the channel input $\frac{1}{T}||\mathbf{X}_i||\leq P$. In this phase, each transmitter has only access to its own cache contents, so that, codeword $\mathbf{X}_i$ is determined by an encoding function of the receiver demands $\mathbf{d}$, the cache contents $Z_i$, and the channel coefficients between TXs and RXs. Afterwards, each receiver $\text{RX}_j$ implements a decoding function to estimate the requested file $\hat{W}_{d_j}$ from the received signal $\mathbf{Y}_j\triangleq\left(Y_j\left(t\right)\right)_{t=1}^{T}$ which is given by
\begin{equation}~\label{eqn1}
 Y_j\left(t\right)=\sum_{i=1}^{K_T}h_{ji}\left(t\right)X_i\left(t\right)+N_j\left(t\right),
\end{equation}
where $Y_j\left(t\right)\in\mathbb{C}$ is the received signal by receiver $\text{RX}_j$ at time $t\in\left[T\right]$, and $N_j\left(t\right)$ is the additive white Gaussian noise at receiver $\text{RX}_j$ at time $t\in\left[T\right]$. Let $h_{ji}\left(t\right)\in\mathbb{C}$ represent the channel gain between transmitter $\text{TX}_i$ and receiver $\text{RX}_j$ at time $t$. We assume all channel coefficients are drawn independently and identically distributed (i.i.d.) from a continuous distribution. For a given coding scheme (caching, encoding, and decoding), the probability of error is obtained by
\begin{equation}
\mathsf{Pe}=\max_{\mathbf{d}\in\left[N\right]^{K_R}}\max_{j\in\left[K_R\right]} \mathbb{P}\left(\hat{W}_{d_j}\neq W_{d_j}\right),
\end{equation} 
which is the worst-case probability of error over all possible demands $\mathbf{d}$ and over all receivers. The transmission rate for a given coding scheme for a fixed cache size of $MF$ bits, and power constraint $P$ is $R\left(M,P\right)=K_RF/T$, where $K_RF$ is the number of bits required to be delivered to receivers and $T$ is the time required to send these bits. We say that the rate $R\left(M,P\right)$ is achievable if and only if there exists a coding scheme that achieves $R\left(M,P\right)$ such that $\mathsf{Pe}\to 0$ as $F\to\infty$. The optimum rate $R^{*}\left(M,P\right)$ is defined as the supremum of all achievable rates. Furthermore, we define the sum degrees of freedom of the network as a function of the cache size $M$ by
\begin{equation}
\mathsf{DoF}\left(M\right)=\lim_{P\to \infty}\frac{R^{*}\left(M,P\right)}{F\log\left(P\right)}.
\end{equation}

 The sum-DoF is a performance metric that defines the pre-log capacity or the multiplexing gain of the network. In other words, the capacity can be expressed by $R^{*}\left(M,P\right)=\mathsf{DoF}\log\left(P\right)+o\left(\log\left(P\right)\right)$ at the high Signal-to-Noise-Ration (SNR) regime, where the $o\left(\log\left(P\right)\right)$ term vanishes as $P\to\infty$. Our objective in this work is to characterize the trade-off between the cache size at the transmitter and the sum degrees of freedom of the cache-aided interference network. Note that the transmitter cache size should satisfy $K_TM\geq N$ to guarantee that every bit of the library content is stored, at least, at one of the transmitter caches in the network. Moreover, if the cache size is larger than the library size, i.e., $M>N$, then each transmitter is able to cache all files and the remaining cache memory would not be used, and hence, the problem will be meaningless. Therefore, we are interested in characterizing the $\mathsf{DoF}$ of the network for cache sizes such that $\frac{N}{K_T}\leq M\leq N$.
\vspace{-5mm}

\section{Main Results}~\label{Sec}\vspace{-5mm}

In this section, we present the main results of this paper. Our main contribution is proposing a novel delivery scheme for cache-aided interference networks that outperforms the coding schemes of prior work in the literature, for almost all values of the cache size. Moreover, we show that the proposed scheme is within a  multiplicative factor of $2$ from the information-theoretic bound, independent of system parameters.

\begin{theorem}~\label{Th1} For a general $K_T\times K_R$ cache-aided interference network with a library of $N\geq K_R$ files, and cache size $M\in\left[N/K_T:N\right]$ files at each transmitter, the sum degrees of freedom satisfies 
\begin{equation}
\min\left\{\frac{K_TK_R}{K_T+K_R-K_TM/N},K_R\right\}\leq\mathsf{DoF}\left(M\right)\leq \min\left\{2\frac{K_TK_R}{K_T+K_R-K_TM/N},K_R\right\}.
\end{equation}
\end{theorem}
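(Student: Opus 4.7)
The plan is to establish the two bounds separately, with the lower (achievability) bound requiring the bulk of the work. For achievability, I would construct an uncoded placement together with a novel delivery scheme that combines zero forcing (ZF) and interference alignment (IA) in a two-layer precoder. For the upper bound, I would combine a general information-theoretic converse that does not restrict attention to any particular placement strategy with the trivial cut-set bound $\mathsf{DoF}(M)\leq K_R$ coming from the number of receivers. The overall structure mirrors the paper's claim that achievability is within a multiplicative factor of $2$ from a general converse.

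For the achievability, let $t=K_TM/N$ and first assume $t\in\mathbb{N}^{+}$; non-integer $t$ is handled by memory-sharing between $\lfloor t\rfloor$ and $\lceil t\rceil$. Following the uncoded placement of Section~\ref{Sec2}, partition each file $W_n$ into $\binom{K_T}{t}$ equal-size subfiles $\{W_{n,\mathcal{T}}\}_{|\mathcal{T}|=t}$ and store $W_{n,\mathcal{T}}$ in the cache of every $\text{TX}_i$ with $i\in\mathcal{T}$. In the delivery phase, for each subset of $t$ cooperating transmitters one obtains a dedicated (sub)message for each receiver; grouping these messages across the $\binom{K_T}{t}$ cooperating subsets gives the cooperative $X$-network described in the paper. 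I would then design, for each cooperating subset, an outer ZF precoder that exploits the $t$-wise cooperation to null the contribution at $t-1$ chosen receivers, and an inner IA precoder that compresses the remaining interference at each receiver into at most $K_R-t$ dimensions. A counting argument — $K_TK_R$ messages transmitted over $K_T+K_R-t$ effective dimensions — yields the claimed $\mathsf{DoF}=K_TK_R/(K_T+K_R-t)$; the $K_R$ cap is hit once $t$ is sufficiently large.

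For the upper bound, I would invoke the general information-theoretic converse available in the literature (e.g.\ via genie-aided/cut-set arguments as in Sengupta et al.) and compare it with the achievable rate to see directly that $\mathsf{DoF}(M)\leq 2K_TK_R/(K_T+K_R-t)$. The trivial bound $\mathsf{DoF}(M)\leq K_R$ is immediate since there are only $K_R$ receive antennas, each supporting at most one degree of freedom. Taking the minimum of the two yields the stated upper bound.

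The main obstacle will be the delivery-scheme construction and its DoF accounting. In particular, one must show that, with probability $1$ over the i.i.d.\ continuous channel realizations, (i) the outer ZF precoders have the required null-space dimensions so that $t-1$ receivers truly see zero interference from each cooperating subset, (ii) the inner IA precoders align the remaining interference vectors at each receiver into a subspace of dimension at most $K_R-t$, and (iii) the aligned interference subspace is linearly independent from the desired-signal subspace at every receiver. These three conditions together ensure decodability; they are generically satisfied by a Zariski-open argument on the channel matrix entries, but verifying them carefully — especially harmonizing the ZF and IA layers across all $\binom{K_T}{t}$ cooperating subsets simultaneously — is the technical crux that Section~\ref{Sec3} will need to handle.
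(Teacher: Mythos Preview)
Your plan is broadly aligned with the paper, but two points deserve sharpening.

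\textbf{Upper bound.} You propose to invoke the Sengupta--Tandon--Simeone converse, but the paper's argument in Appendix~\ref{App2} is much simpler and does not use it. It allows full cooperation among all transmitters and among all receivers to reduce the system to a point-to-point MIMO link, giving $\mathsf{DoF}(M)\leq\min\{K_T,K_R\}$, and then checks the elementary inequality
\[
\min\{K_T,K_R\}\cdot\frac{K_T+K_R-\tau}{K_TK_R}\;\leq\; 2\qquad(\tau\geq 1).
\]
No placement-aware converse is needed for Theorem~\ref{Th1}; Sengupta et al.\ would also give a valid upper bound, but it is unnecessary machinery here and makes the factor-$2$ comparison less transparent.

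\textbf{Achievability.} Your two-layer ZF/IA description is correct in spirit, but the phrase ``for each cooperating subset'' is a red flag. Processing the $\binom{K_T}{\tau}$ cooperating sets one at a time is essentially the route taken in~\cite{xu2017fundamental}, and it yields only the sum-DoF in~\eqref{eqn17}, not $K_TK_R/(K_T+K_R-\tau)$. The paper's improvement hinges on an additional splitting and a scheduling step (Lemma~\ref{Lem1}): each subfile $W_{d_j,\mathcal{S}}$ is further cut into $\tau!\,(K_T-\tau)!$ pieces indexed by permutations, and these pieces are partitioned into $(K_T-1)!$ transmission blocks indexed by circular permutations $\pi\in\Pi_{[K_T]}^{\text{circ}}$, so that in every block each \emph{physical} transmitter $\text{TX}_{\pi(i)}$ has exactly one message per receiver and that message is simultaneously held by the next $\tau-1$ transmitters $\text{TX}_{\pi(i+1)},\ldots,\text{TX}_{\pi(i+\tau-1)}$ in permutation order. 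Only with this cyclic structure does the per-receiver accounting --- $K_T$ desired streams against $K_R-\tau$ aligned interference directions --- go through and give your claimed count of ``$K_TK_R$ messages over $K_T+K_R-t$ dimensions''. Without it, the derived effective channels at different virtual transmitters are not sufficiently independent for IA to compress the interference to $K_R-\tau$ dimensions; this is exactly the gap between~\eqref{eqn17} and~\eqref{eqn18} and is what the paper's Lemma~\ref{Lem2} and Appendix~\ref{App4} are there to certify.
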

The proposed coding scheme that achieves the lower bound on the sum-DoF in Theorem~\ref{Th1} is obtained by applying both zero-forcing and interference alignment techniques. In particular, we schedule the bits needed to be transmitted to receivers as groups, where each group contains $K_RK_T$ messages transmitted simultaneously in the same transmission block such that each transmitter has a dedicated message intended to each receiver. Furthermore, the messages assigned to a specific transmitter are available at $K_TM/N-1$ adjacent transmitters which we call cooperative or, equivalently cognitive transmitters. Thus, we design transmit beamforming matrices comprising of two layers of precoders, i.e., two precoders are multiplied by each other. The first layer (ZF precoder) is designed to null out messages intended to a specific receiver at $K_TM/N-1$ unintended, neighboring receivers. The second layer (IA precoder) is designed to align the remaining interference signals at each receiver to occupy smaller dimensions of space. The detailed description of the achievable coding scheme is presented in Section~\ref{Sec3}. Theorem~\ref{Th1} also shows that the proposed coding scheme is within at most a multiplicative factor of $2$ from the information-theoretic bound for all system parameters (the number of transmitters $K_T$, the number of receivers $K_R$, the number of files $N$, and the cache size $M$). The proof is provided in Appendix~\ref{App2}.

Before going into the details of our proposed scheme, we compare the achievable sum-DoF of our proposed coding scheme with those of the coding schemes in prior work~\cite{maddah2015cache,naderializadeh2017fundamental,hachem2016degrees,xu2017fundamental}\footnote{It is worth mentioning that \cite{naderializadeh2017fundamental,hachem2016degrees,xu2017fundamental} investigate the problem of a general interference channel with caches at both the transmitters and the receivers. Therefore, to compare their coding schemes to our proposed one, we consider the special case of their results in which the caches are available only at the transmitters, i.e., the caches at receivers are equal to zero.}. It is convenient to define a parameter $\tau=K_TM/N$ to compare between the different schemes at corner points $M=\tau N/K_T$ for $\tau\in\left[K_T\right]$. The reciprocal sum degrees of freedom $1/\mathsf{DoF}\left(M\right)$ for $M\in\left[N/K_T:N\right]$ is obtained from the lower convex envelope of these corner points, since $1/\mathsf{DoF}\left(M\right)$ is a convex function of cache size $M$ (See~\cite[Lemma~$1$]{maddah2015cache}).

In~\cite{maddah2015cache}, the authors have exactly characterized the sum-DoF of cache-aided interference networks for the special case of $K_T=K_R=3$ at corner points $\tau\in\left[3\right]$. When $\tau\in\left\{1,3\right\}$, the coding scheme in~\cite{maddah2015cache} and our proposed scheme achieve the same sum-DoF, where the optimal sum-DoF can be achieved by only using interference alignment for $\tau=1$, and the zero-forcing for $\tau=3$. When $\tau=K_R-1=2$, the authors in~\cite{maddah2015cache} have proposed a coding scheme that jointly uses both the zero-forcing and interference alignment techniques to achieve $\mathsf{DoF}=18/7$. Meanwhile our proposed scheme achieves $\mathsf{DoF}=9/4$ by using a novel scheduling technique at the delivery phase. Although the scheme in~\cite{maddah2015cache} is optimal at the point $\tau=K_R-1$, this scheme cannot be generalized for the other corner points $1<\tau< K_R-1$ for a general $K_T\times K_R$ interference network as we will elaborate more in Section~\ref{Sec2}.       

For comparison, we summarize the achievable sum-DoF of our proposed scheme and the schemes in~\cite{naderializadeh2017fundamental,hachem2016degrees,xu2017fundamental} as follows \small 
\begin{IEEEeqnarray}{lll}
\mathsf{DoF}^{\text{proposed}}&=&\min\left\{\frac{K_TK_R}{K_T+K_R-\tau},K_R\right\},~\label{eqn18}\\
\mathsf{DoF}^{\left[25\right]}&=&\min\left\{\tau,K_R\right\},~\label{eqn15}\\
\mathsf{DoF}^{\left[26\right]}&=&\frac{K_TK_R}{K_T+K_R-1},~\label{eqn16}\\
\mathsf{DoF}^{\left[27\right]}&=&\left\{\begin{array}{ll}
K_R & \tau\geq K_R\\
\frac{\tau{K_T\choose \tau}K_R}{\tau{K_T\choose \tau}+1} & \tau=K_R-1\\
\max\left\{\underset{{1\leq\grave{\tau}\leq\tau}}{\max}\ \frac{\grave{\tau}{K_T\choose \grave{\tau}}K_R}{\grave{\tau}{K_T\choose \grave{\tau}}+\left(K_R-\grave{\tau}\right){K_T\choose \grave{\tau}-1}},\tau\right\} &\tau<K_R-1
\end{array}
\right. .~\label{eqn17}
\end{IEEEeqnarray}

\normalsize
In~\cite{naderializadeh2017fundamental}, the authors have introduced a one-shot linear scheme to achieve the sum-DoF in~\eqref{eqn15} by applying the zero-forcing technique. It is easy to show that our proposed coding scheme achieves the same sum-DoF when $\tau\geq\min\left\{K_T,K_R\right\}$. Actually, the coding schemes that are based on zero-forcing only can achieve the optimal DoF as $\tau\geq\min\left\{K_T,K_R\right\}$ (See Appendix~\ref{App2}). However, our proposed coding scheme achieves a strictly higher sum-DoF when $1\leq\tau<\min\left\{K_T,K_R\right\}$ which indicates that using zero-forcing alone is not sufficient to achieve a good performance for all cache size $M$.

A delivery scheme using only interference alignment has been introduced in~\cite{hachem2016degrees} to get the sum-DoF in~\eqref{eqn16}, where the authors have neglected the gain that can be obtained from the cooperation between the transmitters when the cache size increases. In contrast, our coding scheme achieves a strictly higher DoF when $1<\tau\leq K_T$. Moreover, the multiplicative gap between the achievable sum-DoFs is given by $\frac{\mathsf{DoF}^{\text{prposed}}}{\mathsf{DoF}^{\left[26\right]}}=1+\frac{\tau-1}{K_T+K_R-\tau}$. Observe that this gap is approximately small when the number of transmitters or receivers is large. This is because, when $K_T\to\infty$ or $K_R\to\infty$, the delivery scheme based on interference alignment achieves approximately the optimal DoF of the $K_T\times K_R$ interference network which is $\min\left\{K_T,K_R\right\}$. However, at a small or moderate number of transmitters or receivers, this gap increases as the transmitter cache size increases, and hence, using both zero-forcing and interference alignment become mandatory to obtain a better performance.

Finally, \cite{xu2017fundamental} has developed a delivery scheme using both zero-forcing and interference alignment to obtain the sum-DoF in~\eqref{eqn17}. We show in Section~\ref{Sec2} that the scheme in~\cite{xu2017fundamental} achieves the optimal sum-DoF when $\tau=K_R-1$. However, our proposed scheme outperforms their delivery scheme when $1<\tau<K_R-1$. To see this, we multiply the numerator and denominator of the achievable sum-DoF in~\eqref{eqn18} with ${K_T-1\choose \tau-1}$ to get \small$\mathsf{DoF}^{\text{proposed}}=\frac{\tau{K_T\choose\tau}K_R}{\tau{K_T\choose\tau}+\left(K_R-\tau\right){K_T-1\choose\tau-1}}$. \normalsize The reason behind this is that our delivery scheme is designed to efficiently minimize the space dimensions spanned by the interference signals at each receiver, compared to the scheme in~\cite{xu2017fundamental}.

\begin{figure}[t]
\centering{\includegraphics[scale=0.37,trim=4 4 4 4,clip]{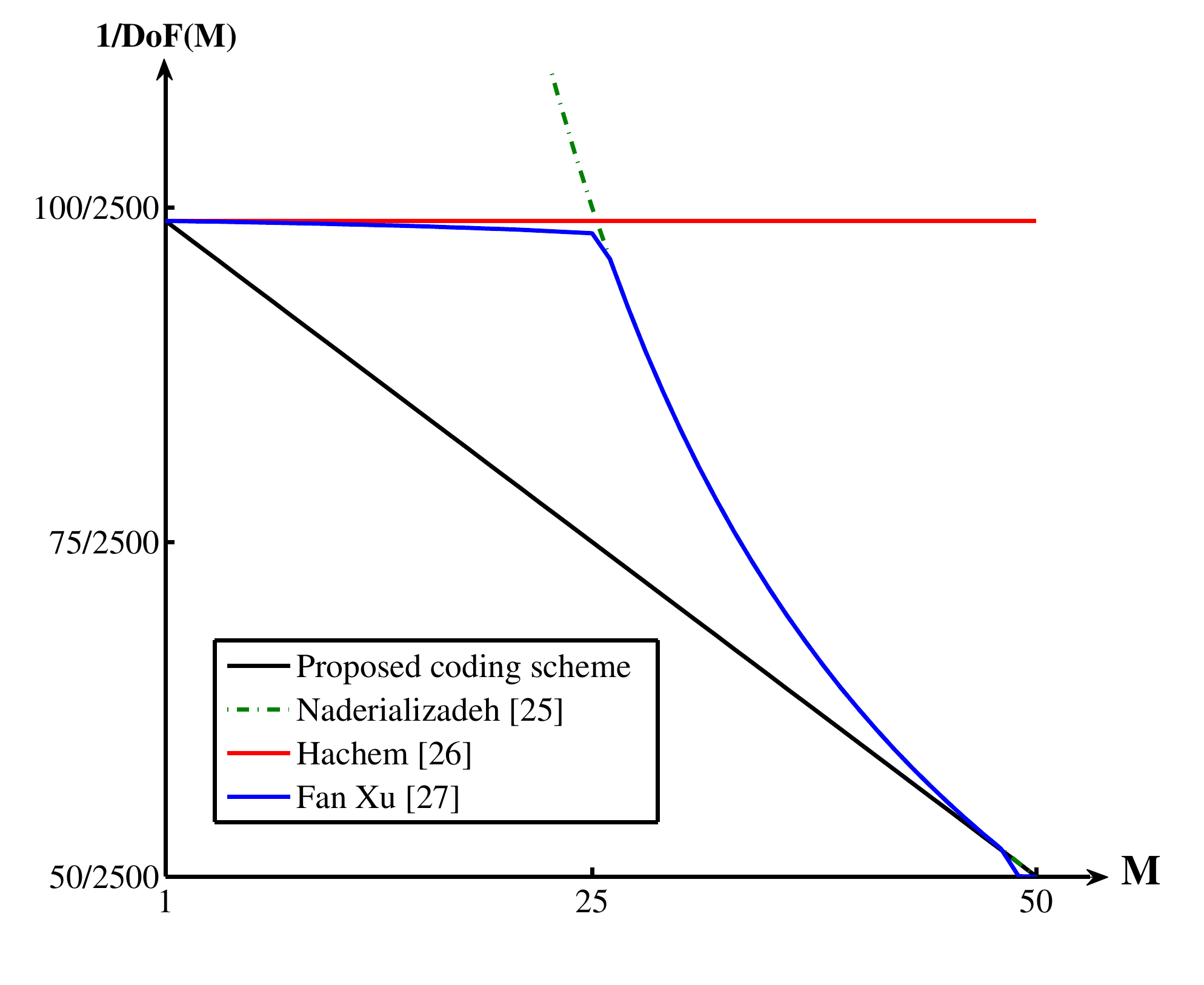}}
\caption{Comparison of the reciprocal sum degrees of freedom versus the transmitter cache size for interference network with $K_T=50$ transmitters and $K_R=50$ receivers.}
\label{fig2}\vspace{-5mm}
\end{figure}
   
\begin{Example} Consider an interference network with $K_T=50$ transmitters, $K_R=50$ receivers, and a library of $N=50$ files. Figure~\ref{fig2} shows the inverse of the sum-DoF of our proposed scheme and the other schemes in~\cite{naderializadeh2017fundamental,hachem2016degrees,xu2017fundamental} as a function of the transmitter cache size. It is observed that the proposed coding scheme outperforms the other schemes, specifically for moderate cache sizes. For example, when the transmitter cache size is $M=25$ files, our proposed scheme achieves $\mathsf{DoF}^{\text{proposed}}=33.33$, while ZF scheme in~\cite{naderializadeh2017fundamental} achieves $\mathsf{DoF}^{\left[25\right]}=25$, IA scheme in~\cite{hachem2016degrees} achieves $\mathsf{DoF}^{\left[26\right]}=25.25$, and the delivery scheme in~\cite{xu2017fundamental} achieves $\mathsf{DoF}^{\left[27\right]}=25.49$.
\end{Example}

\section{Content Placement and The Cooperative $X$-Network}\label{Sec2}

In this section, we first present the content placement strategy for cache-aided interference networks that was introduced in~\cite{maddah2015cache} and~\cite{naderializadeh2017fundamental}. Then, we show that for this content placement strategy, there arises a new network topology in the delivery phase which we call the \textit{cooperative $X$-network}. For the cooperative $X$-network, we derive an upper bound on the sum-DoF which is considered an upper bound for any delivery scheme using such uncoded placement strategy.

\subsection{Content Placement}~\label{Subsec2}
\vspace{-5mm}

Assume each transmitter has a cache of size $M=\tau N/K_T$, where $\tau\in\left[K_T\right]$. In the placement phase, we split each file $W_f\in\mathcal{W}$ into ${K_T\choose \tau}$ disjoint subfiles, each of size $F/{K_T\choose \tau}$ bits. As a result, file $W_f$ is represented by
\begin{equation}
W_f=\left\{W_{f,\mathcal{S}}:\mathcal{S}\subseteq \left[K_T\right], |\mathcal{S}|=\tau\right\}.
\end{equation}
For every file $W_f\in\mathcal{W}$, the subfile $W_{f,\mathcal{S}}$ is stored at the cache of transmitter $\text{TX}_i$, if $i\in\mathcal{S}$. Thus, each transmitter stores $N{K_T-1 \choose \tau-1}$ subfiles. Accordingly, the number of bits stored at each transmitter is equal to $N{K_T-1 \choose \tau-1}\frac{F}{{K_T\choose \tau}}=MF$ bits. Note that this placement strategy satisfies the cache size constraint for each transmitter. Moreover, we emphasize that the content placement is performed without any prior knowledge of the receiver demands or channel gains in the delivery phase, which is a practically relevant assumption, since there is a large time separation between the placement and delivery phases. This placement strategy has been first proposed in~\cite{maddah2015cache} and~\cite{naderializadeh2017fundamental} for cache-aided interference networks. For an illustration of the content placement, please refer to Example~$1$ in Subsection~\ref{Subsec1}.
 
\subsection{Cooperative $X$-network}~\label{Subsec3} 
\vspace{-5mm}
 
In the delivery phase, consider a demand vector $\mathbf{d}$, where the receiver $\text{RX}_j$ requests file $W_{d_j}$, $j\in\left[K_R\right]$. The transmitters should send the requested subfiles \small$\left\{W_{d_j,\mathcal{S}}:j\in\left[K_R\right],\mathcal{S}\subseteq\left[K_T\right],|\mathcal{S}|=\tau\right\}$ \normalsize to the receivers with a total of $K_R{K_T\choose \tau}$ subfiles. In this case, each subset $\mathcal{S}$ of $\tau$ transmitters has a dedicated message $W_{d_j,\mathcal{S}}$ to be delivered to the receiver $\text{RX}_j$, $j\in\left[K_R\right]$. This transmission problem is called a \textit{ cooperative $X$-network}, where $\tau$ is the cooperation order between transmitters. Note that the cooperative $X$-network formulation involves several types of transmission problems discussed next:

 $\bullet$ (\textit{$X$-networks}): When $\tau=1$, the problem is reduced to the traditional $X$-network studied in~\cite{cadambe2009interference} in which each transmitter has a dedicated message for every receiver with no cooperation between the transmitters. The optimal sum-DoF of $K_T\times K_R$ $X$-networks is given by $\frac{K_TK_R}{K_R+K_R-1}$ that is obtained by employing interference alignment.
 
 $\bullet$ (\textit{MISO broadcast channel}): When $\tau=K_T$, all transmitters fully cooperate with each other to deliver $K_R$ messages, one for each receiver. Thus, the problem becomes a MISO broadcast channel with a single transmitter of $K_T$ antennas connected to $K_R$ receivers, each equipped with a single antenna. The maximum sum-DoF of a MISO broadcast channel is given by $\min\left\{K_T,K_R\right\}$ which is obtained using zero-forcing~\cite{weingarten2006capacity}.
 
 $\bullet$ (\textit{MISO $X$-networks}): For general $1<\tau<K_T$, observe that the messages $\lbrace W_{d_j,\mathcal{S}}\rbrace_{j=1}^{K_R}$ are available at $\tau$ transmitters $\lbrace \text{TX}_i\rbrace$, $i\in\mathcal{S}$, i.e., these messages can be transmitted using $\tau$ antennas. Hence, the cooperative $X$-network can be seen as a MISO $X$-network with ${K_T\choose \tau}$ virtual transmitters each with $\tau$ antennas and $K_R$ single-antenna receivers. The MISO $X$-network was studied in~\cite{sun2012degrees}, where the optimal sum-DoF for ${K_T\choose \tau}$ transmitters each with $\tau$ antennas and $K_R$ single-antenna receivers is determined as \small$\frac{\tau {K_T\choose \tau} K_R}{\tau {K_T\choose \tau}+K_R-\tau}$. \normalsize However, there is a subtle, yet, important difference between the cooperative $X$-network and the actual MISO $X$-network. The channels between the multi-antenna transmitters and receivers in the actual MISO $X$-network are i.i.d. drawn from continuous distributions; however, the channels between the ${K_T\choose \tau}$ virtual transmitters and $K_R$ receivers in the cooperative $X$-network are dependent random variables derived from only $K_R K_T$ actual channels. Therefore, this raises an unsettled key question: Does the cooperative $X$-network with cooperation order $1<\tau<K_R$ have the same sum-DoF as the MSIO $X$-network? In~\cite{maddah2015cache}, a delivery scheme for a $ 3\times 3$ cooperative network was proposed to achieve a sum-DoF of $\tau{K_T\choose \tau}/\left(\tau{K_T\choose \tau}+1\right)$ for $\tau=K_R-1=2$. Moreover, the authors in~\cite{xu2017fundamental} proposed a delivery scheme for cooperative $X$-network achieving sum-DoF of $\tau{K_T\choose \tau}/\left(\tau{K_T\choose \tau}+1\right)$ for $\tau=K_R-1$ for arbitrary number of transmitters and receivers. The achievability proofs in~\cite{maddah2015cache} and~\cite{xu2017fundamental} are mainly based on generating independent precoding factors at the transmitters to deal with the dependence of the derived channel coefficients. Therefore, both cooperative $X$-networks and MISO $X$-networks have the same sum-DoF of $\tau{K_T\choose \tau}/\left(\tau{K_T\choose \tau}+1\right)$ when $\tau=K_R-1$. However, is it possible to achieve the same results for cooperative $X$-networks for all cooperation orders $1<\tau<K_R$? To answer this open question, we derive an upper bound on the sum-DoF of the cooperative $X$-networks presented in the following theorem.  

\begin{theorem}~\label{Th2} The sum degrees of freedom of the cooperative $X$-network of $K_T$ transmitters, $K_R$ receivers, and $K_R{K_T\choose \tau}$ messages satisfies
\begin{equation}
\mathsf{DoF}\left(\tau\right)\leq \min_{\tau \leq \sigma\leq\min\left\{K_T,K_R\right\}}\frac{\tau{K_T\choose \tau}K_R}{\tau{K_T\choose \tau}+\left(K_R-\sigma\right){\sigma-1\choose \tau-1}} 
\end{equation}
\end{theorem}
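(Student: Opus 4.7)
The plan is to prove Theorem~\ref{Th2} by a genie-aided cut-set argument in the spirit of the classical converses for $X$-networks and MISO broadcast channels. Fix any integer $\sigma$ with $\tau \leq \sigma \leq \min\{K_T, K_R\}$, and pick arbitrary subsets $\mathcal{I} \subseteq [K_T]$ and $\mathcal{J} \subseteq [K_R]$ each of cardinality $\sigma$. Since the channel law is invariant under permutations of transmitter and receiver indices, a standard symmetrization by time-sharing over all such permutations lets us assume without loss of generality that every one of the $K_R {K_T \choose \tau}$ messages has a common DoF $d$, so that $\mathsf{DoF}(\tau) = K_R {K_T \choose \tau} d$. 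Endow the receivers in $\mathcal{J}$ with the genie side-information
\begin{equation*}
\mathcal{G} = \{W_{d_j, \mathcal{S}} : j \notin \mathcal{J},\, \mathcal{S} \not\subseteq \mathcal{I}\}.
\end{equation*}

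The heart of the argument is a reconstruction step. For any $i \notin \mathcal{I}$, every message stored at $\text{TX}_i$ has the form $W_{d_j, \mathcal{S}}$ with $i \in \mathcal{S}$, so $\mathcal{S} \not\subseteq \mathcal{I}$; hence such a message is either an intended file of some receiver in $\mathcal{J}$ or belongs to $\mathcal{G}$. Once the receivers in $\mathcal{J}$ reliably decode their own files, they therefore know every input $X_i^T$ with $i \notin \mathcal{I}$ and can subtract it off, leaving the effective $\sigma \times \sigma$ MIMO channel $\tilde{Y}_j^T = \sum_{i \in \mathcal{I}} h_{ji} X_i^T + N_j^T$ for $j \in \mathcal{J}$. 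From this reduced channel the receivers in $\mathcal{J}$ can further decode, in a genie-aided sense, every message $W_{d_j, \mathcal{S}}$ with $j \notin \mathcal{J}$ and $\mathcal{S} \subseteq \mathcal{I}$, since these are precisely the messages at transmitters in $\mathcal{I}$ that are not already in $\mathcal{G}$. Applying Fano's inequality to the combined list of $\sigma {K_T \choose \tau} + (K_R - \sigma){\sigma \choose \tau}$ decoded-but-not-given messages and invoking the $\sigma \log P + o(\log P)$ single-user MIMO capacity bound on the $\sigma$-output cut yields
\begin{equation*}
\Bigl[\sigma {\textstyle{K_T \choose \tau}} + (K_R - \sigma){\textstyle{\sigma \choose \tau}}\Bigr] d \leq \sigma.
\end{equation*}
Using the elementary identity $\tau {\sigma \choose \tau} = \sigma {\sigma - 1 \choose \tau - 1}$ together with $\mathsf{DoF}(\tau) = K_R {K_T \choose \tau} d$ rearranges this to the announced expression, and minimizing over the feasible range of $\sigma$ completes the proof.

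The main obstacle is turning the informal ``decode-then-subtract-then-decode-more'' chain into a clean conditional-entropy bound, since the reconstruction of $X_i^T$ for $i \notin \mathcal{I}$ depends on first decoding $\{W_{d_j}\}_{j \in \mathcal{J}}$ to within Fano-scale slack. This is handled by writing a single Fano inequality for the union of $\{W_{d_j}\}_{j \in \mathcal{J}}$ and $\{W_{d_j, \mathcal{S}} : j \notin \mathcal{J},\, \mathcal{S} \subseteq \mathcal{I}\}$ conditioned on $(\mathcal{G}, Y_\mathcal{J}^T)$, and then bounding the differential entropy on the cut via the usual MIMO argument applied to the reduced channel. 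A secondary bookkeeping issue is to justify the symmetrization step, which follows from the joint invariance of the channel distribution and the message structure under independent permutations of the receiver and transmitter indices.
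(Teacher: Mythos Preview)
Your proposal is correct and mirrors the paper's own genie-aided argument: the same side-information $\mathcal{G}=\overline{\mathcal{W}}$ given to the $\sigma$ chosen receivers, the same subtract-and-reconstruct step to peel off $\mathbf{X}_i$ for $i\notin\mathcal{I}$ and reduce to a $\sigma\times\sigma$ channel, and the same message count $\sigma\binom{K_T}{\tau}+(K_R-\sigma)\binom{\sigma}{\tau}$ bounded by the $\sigma$-output MIMO cut. The only place the paper is more explicit is the ``decode more'' step, which it justifies by inverting the almost-surely full-rank $\sigma\times\sigma$ submatrix $\mathbf{H}_{\mathcal{J}}^{\mathcal{I}}$ (plus noise reduction) so that the receivers in $\mathcal{J}$ obtain a degraded version of $\tilde{\mathbf{Y}}_{\overline{\mathcal{J}}}$ and hence can decode every $W_{d_j,\mathcal{S}}$ with $j\notin\mathcal{J},\ \mathcal{S}\subseteq\mathcal{I}$; this is precisely the content you should spell out behind your phrase ``in a genie-aided sense.''
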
 
\begin{proof}
The proof is presented in Appendix~\ref{App1}. 
\end{proof}
In Theorem~\ref{Th2}, it is shown that the upper bound on the sum-DoF of the cooperative $X$-network coincides with the sum-DoF of the MISO $X$-network when $\tau=K_R-1$. Hence, the achievable schemes in~\cite{maddah2015cache} and~\cite{xu2017fundamental} are optimal when $\tau=K_R-1$. However, the upper bound in Theorem~\ref{Th2} shows that the cooperative $X$-network has a smaller sum-DoF than the MISO $X$-network when $1<\tau<K_R-1$. Therefore, we cannot proceed along the lines of the delivery scheme in~\cite{maddah2015cache} to achieve sum-DoF of \small$\frac{\tau {K_T\choose \tau} K_R}{\tau {K_T\choose \tau}+K_R-\tau}$ \normalsize for $1<\tau<K_R-1$ which is fundamentally different from the case of $\tau=K_R-1$.

Theorem~\ref{Th2} not only gives an upper bound on the sum-DoF of cooperative $X$-networks, but also gives an upper bound on the sum-DoF of cache-aided interference networks that uses the placement strategy introduced in Subsection~\ref{Subsec2}. In other words, there is no delivery scheme for cache-aided interference networks with transmitter caches of size $M=\tau N/K_T$ files achieving a sum-DoF higher than the bound given in Theorem~\ref{Th2} under the content placement strategy introduced in Subsection~\ref{Subsec2}. 

In~\cite{sengupta2016cache}, an information-theoretic bound on the normalized delivery time (NDT) of the cache-aided interference networks is proposed. The NDT refers to the worst case delivery time for satisfying the receivers requests. Thus, the relation between NDT and the sum-DoF is given by $\mathsf{NDT}=K_R/\mathsf{DoF}$. Figure~\ref{fig4} depicts the inverse of the upper bound on the sum-DoF proposed in~\cite[Theorem~$1$]{sengupta2016cache} for the $10\times 10$ cache-aided networks and the inverse of the upper bound on the sum-DoF of the cooperative $X$-networks in Theorem~\ref{Th2} for $M=\tau N/K_T$, $\tau\in\left[10\right]$. Note that, as demonstrated in Figure~\ref{fig4}, the upper bound on the cooperative $X$-networks is tighter than the upper bound introduced in~\cite{sengupta2016cache} for smaller cache sizes. This may be due to two possible explanations. First, it is possible that the content placement introduced in Section~\ref{Subsec2} is not sufficient to achieve the upper bound in~\cite{sengupta2016cache}. Second, the upper bound in~\cite{sengupta2016cache} is not achievable for small cache sizes. This observation motivates future work to find better placement strategies or upper bounds for cache-aided interference networks. 

\begin{figure}[t]
\centering{\includegraphics[scale=0.37,trim=4 4 4 4,clip]{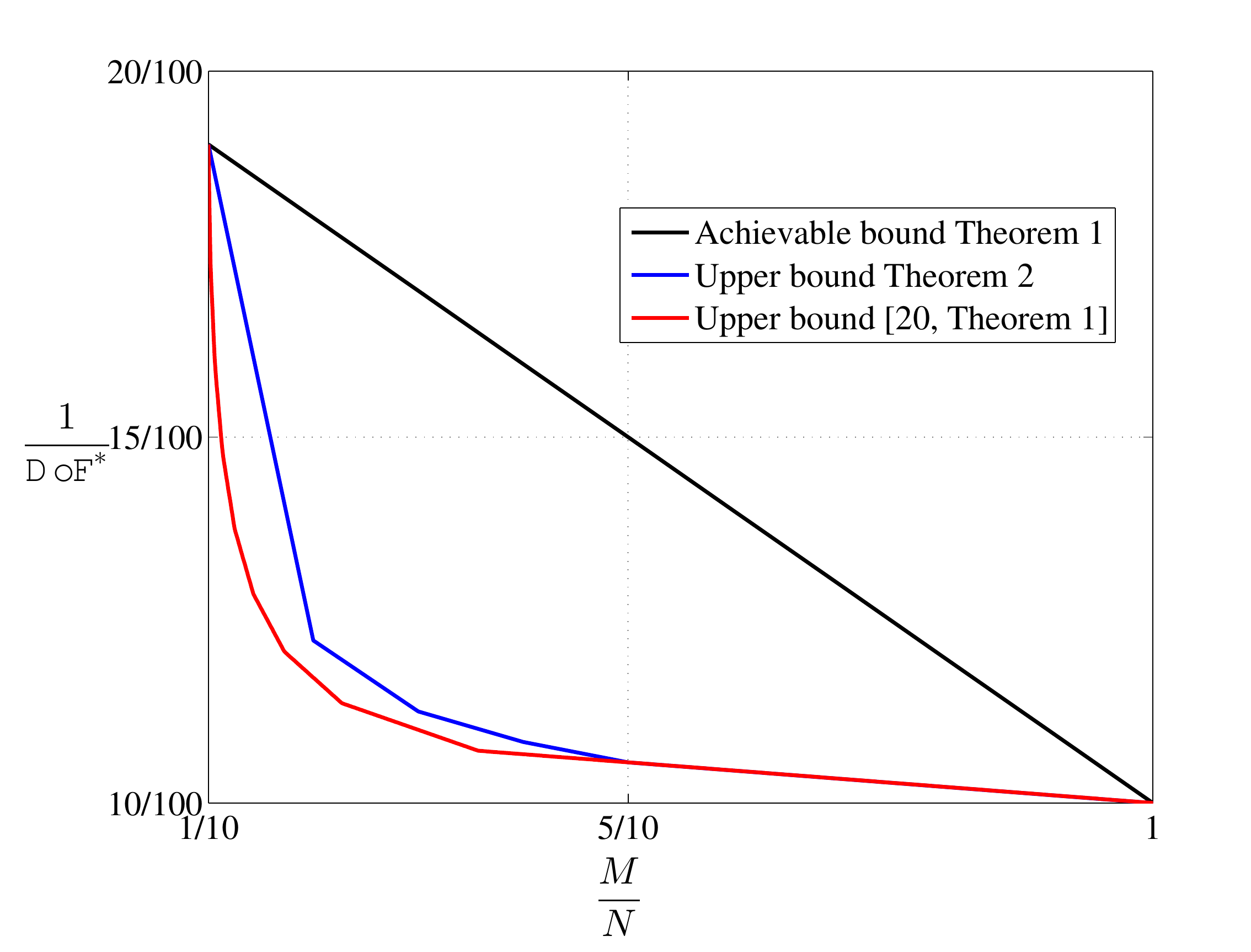}}
\caption{Comparison of the reciprocal of the upper bound on the sum-DoF proposed in~\cite{sengupta2016cache} and the reciprocal of the upper bound in Theorem~\ref{Th2} for the $10\times 10$ cache-aided interference network.}
\label{fig4}\vspace{-8mm}
\end{figure}

Before presenting the achievable scheme for the general case, we first illustrate the main idea of the achievable scheme, using an example.

\subsection{Achievable scheme for $3\times 3$ cache-Aided interference networks}\label{Subsec1}

Consider a cache-aided interference network with $K_T=3$ transmitters each with a cache $\lbrace Z_i\rbrace_{i=1}^{3}$ of size $M=2$ files, $K_R=3$ receivers, and a library of $N=3$ files: $\mathcal{W}=\left\{A,B,C\right\}$, each consisting of $F$ bits.

 \textit{In the placement phase}, each file $W_f$ is split into ${3\choose 2}=3$ subfiles of equal size $F/3$ bits. We label each subfile with $W_{f,\mathcal{S}}$, where $\mathcal{S}\subset\left[3\right]$ and $|\mathcal{S}|=2$ such that the subfile $W_{f,\mathcal{S}}$ is stored at transmitters $\text{TX}_i$, $i\in\mathcal{S}$. For example, file $A$ is split as follows $A=\left(A_{12},A_{13},A_{23}\right)$, where subfile $A_{12}$ is stored at transmitters $\text{TX}_1$ and $\text{TX}_2$, subfile $A_{13}$ at $\text{TX}_1$ and $\text{TX}_3$, and subfile $A_{23}$ at $\text{TX}_2$ and $\text{TX}_3$. Hence, the cache contents at each transmitter are given by
\begin{equation*}
\begin{aligned}
Z_1=\left(A_{12},A_{13},B_{12},B_{13},C_{12},C_{13}\right)\\
Z_2=\left(A_{12},A_{23},B_{12},B_{23},C_{12},C_{23}\right)\\
Z_3=\left(A_{13},A_{23},B_{13},B_{23},C_{13},C_{23}\right)
\end{aligned}
\end{equation*}  
Hence, each transmitter stores $3*{3-1\choose 2-1}=6$ subfiles with each of it having a size of $F/3$ bits resulting in a total of $2F$ bits that satisfies the memory size constraint. 

\textit{In the delivery phase}, we assume that the receiver demand files $W_{d_1}=A$, $W_{d_2}=B$, and $W_{d_3}=C$. Consider subfile $W_{f,\lbrace ij\rbrace}$, $i<j$, $\left(i,j\right)\in\left[3\right]$ stored at two transmitters $\text{TX}_i$ and $\text{TX}_j$. We split subfile $W_{f,\lbrace ij\rbrace}$ into two disjoint smaller subfiles of equal sizes $W_{f,\lbrace ij\rbrace}^{i},W_{f,\lbrace ij\rbrace}^{j}$. For subfile $ W_{f,\lbrace ij\rbrace}^{i}$, the transmitter $\text{TX}_i$ is the primary (master) transmitter responsible for delivering this subfile, while transmitter $\text{TX}_j$ represents the cooperative transmitter that has this subfile as a side information. For example, subfile $A_{13}$ is split into $A_{13}=\left(A_{13}^{1},A_{13}^{3}\right)$, where subfile $A_{13}^{1}$ is assigned to transmitter $\text{TX}_1$ while it is also available at transmitter $\text{TX}_3$ as a side information.

 Consequently, we schedule the $18$ small subfiles into $2$ sets with each set containing $9$ subfiles, such that the subfiles in the same set are delivered simultaneously in the same transmission block. In other words, it requires $2$ transmission blocks to complete the transmission, where each transmission block occurs over $\mu_n$-symbol extensions. The value of $\mu_n$ will be specified later. Figure~\ref{fig3} illustrates the scheduling strategy and the set of subfiles transmitted in each transmission block. In each transmission block, each transmitter has a distinct subfile intended to each receiver, and also has the subfiles of an adjacent transmitter as a side information. For example, consider the first transmission block (Fig.~\ref{fig3_A}). Transmitter $\text{TX}_1$ has subfiles $A_{12}^{1}$, $B_{12}^{1}$, and $C_{12}^{1}$ intended to receivers $\text{RX}_1$, $\text{RX}_2$, and $\text{RX}_3$, respectively. In addition, the subfiles $A_{13}^{3}, B_{13}^{3}, C_{13}^{3}$ are available as a side information at transmitter $\text{TX}_1$. Thus, we consider transmitter $\text{TX}_1$ as a cooperative (cognitive) transmitter to the transmitter $\text{TX}_3$. 

We now present the delivery scheme at the first transmission block. The delivery scheme is based on designing two-layer beamforming matrices at the transmitters, where the first layer is responsible for applying zero-forcing and the second layer is responsible for applying asymptotic interference alignment. For transmitter $\text{TX}_1$, the subfiles $A_{12}^{1}$, $B_{12}^{1}$, $C_{12}^{1}$ are transmitted using beamforming matrices $\mathbf{V}_{a,12}^{1}$, $\mathbf{V}_{b,12}^{1}$, $\mathbf{V}_{c,12}^{1}$, respectively, and subfiles $A_{13}^{3}, B_{13}^{3}, C_{13}^{3}$ are transmitted using beamforming matrices $\mathbf{V}_{a,13}^{1}$, $\mathbf{V}_{b,13}^{1}$, $\mathbf{V}_{c,13}^{1}$, respectively. Each beamforing matrix consists of two layers, e.g., $\mathbf{V}_{a,12}^{1}=\mathbf{V}_{a,12}^{1,\text{ZF}}\mathbf{V}_{a}^{\text{IA}}$, where the first layer precoder $\mathbf{V}_{a,12}^{1,\text{ZF}}$ is different for each transmitter while the second precoder $\mathbf{V}_{a}^{\text{IA}}$ is the same for each transmitter.

\begin{figure}[t]
\centering
\begin{subfigure}[b]{0.48\linewidth}
  \centerline{\includegraphics[scale=0.38,trim=4 4 4 4,clip]{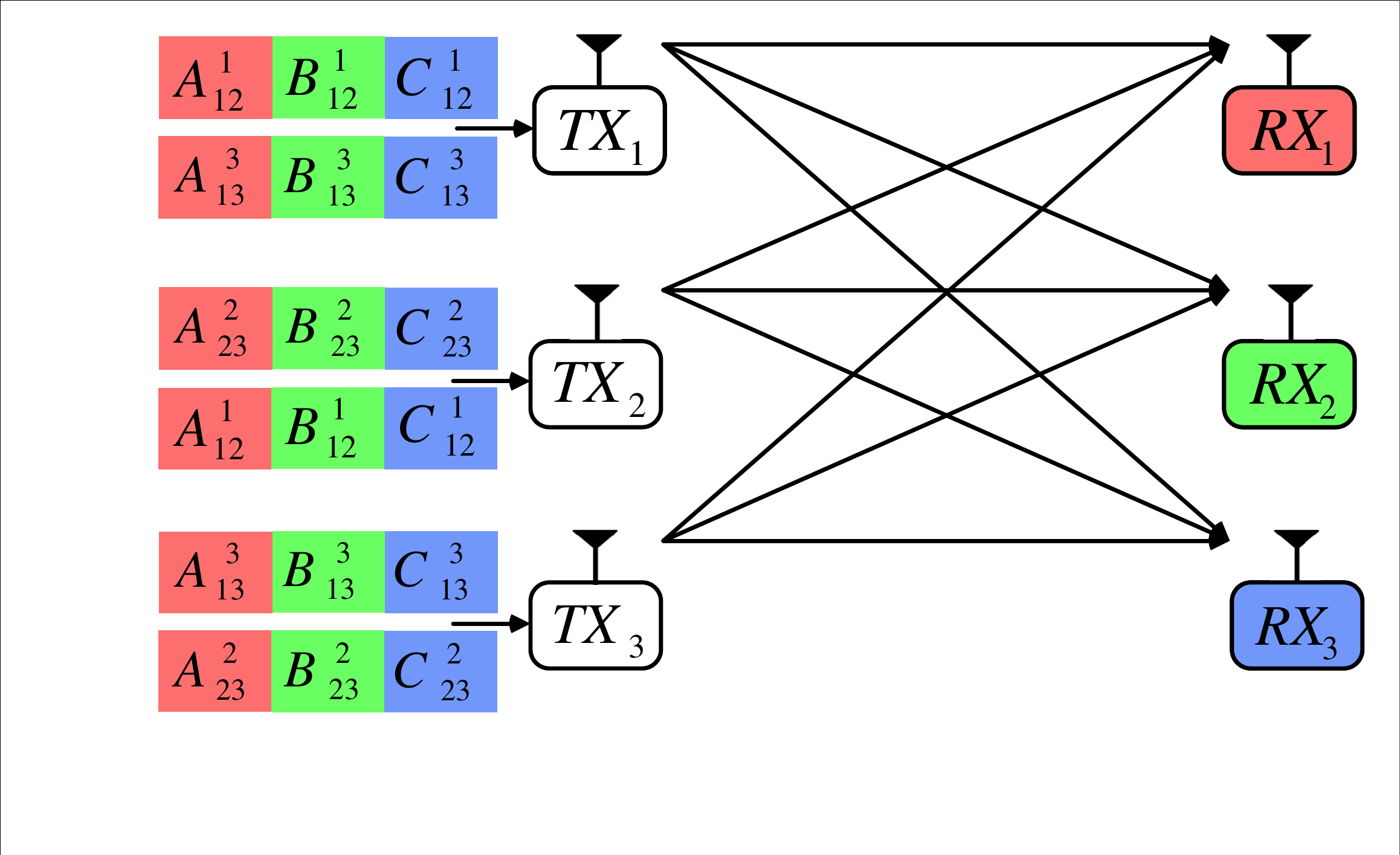}}
 \caption{First transmission block}\label{fig3_A}
 \end{subfigure}
 \begin{subfigure}[b]{0.48\linewidth}
  \centerline{\includegraphics[scale=0.38,trim=4 4 4 4,clip]{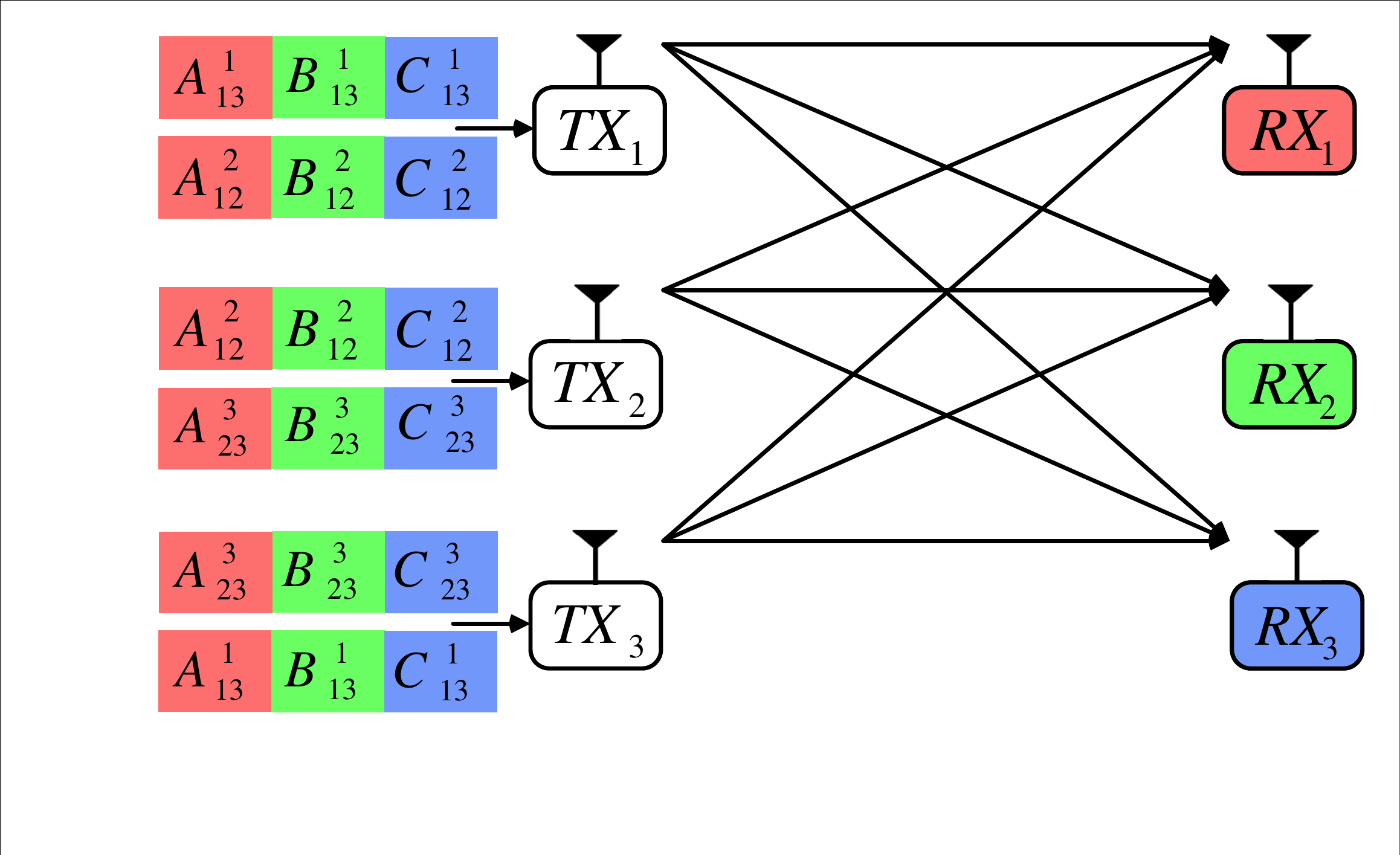}}
\caption{Second transmission block}\label{fig3_B}
 \end{subfigure}
\caption{Description of the delivery scheme for the $3\times 3$ cache-aided interference network with a library of $3$ files and transmitter cache size $M=2$ files.}
\label{fig3}\vspace{-7mm}
\end{figure}

 The first precoder is designed such that the messages intended to receiver $\text{RX}_i$ from all transmitters are canceled at the adjacent receiver $\text{RX}_{i+1}$. In other words, the objective of the first layer is to zero-force the subfiles $\left(A_{12}^{1},A_{23}^{2},A_{13}^{3}\right)$ at receiver $\text{RX}_2$, subfiles $\left(B_{12}^{1},B_{23}^{2},B_{13}^{3}\right)$ at receiver $\text{RX}_3$, and $\left(C_{12}^{1},C_{23}^{2},C_{13}^{3}\right)$ at receiver $\text{RX}_1$. For example, to cancel the interference of subfile $A_{12}^{1}$ at receiver $\text{RX}_2$, we set precoders $\mathbf{V}_{a,12}^{1,\text{ZF}}$ and $\mathbf{V}_{a,12}^{2,\text{ZF}}$ as follows

\begin{equation}
\begin{aligned}
&V_{a,12}^{1,\text{ZF}}\left(u\right)=h_{22}\left(u\right),\ \forall u\in\left[\mu_n\right]\\
&V_{a,12}^{2,\text{ZF}}\left(u\right)=-h_{21}\left(u\right),\ \forall u\in\left[\mu_n\right],
\end{aligned}
\end{equation}
where $V_{a,12}^{1,\text{ZF}}\left(u\right)$, $V_{a,12}^{2,\text{ZF}}\left(u\right)$ are the $u$-th diagonal element of matrices $\mathbf{V}_{a,12}^{1,\text{ZF}}$, $\mathbf{V}_{a,12}^{2,\text{ZF}}$, respectively. Thus, subfile $A_{12}^{1}$ is received with directions $\mathbf{M}_{12}^{12}\mathbf{V}_{a}^{\text{IA}}$, $\mathbf{0}$, $\mathbf{M}_{23}^{12}\mathbf{V}_{a}^{\text{IA}}$ at receivers $\text{RX}_1$, $\text{RX}_2$, and $\text{RX}_3$, respectively, where $\mathbf{M}_{gk}^{lm}$ is a $\mu_n\times \mu_n$ diagonal matrix with the $u$-th element given by
 \begin{equation}
 M_{gk}^{lm}\left(u\right)=\mathit{det}\begin{bmatrix}
 h_{gl}\left(u\right)&h_{gm}\left(u\right)\\
 h_{kl}\left(u\right)&h_{km}\left(u\right)\\
 \end{bmatrix}.
 \end{equation}
The received directions of all subfiles after the first layer of precoding at all receivers is shown in Table~\ref{Tab1}. Thus, each receiver has three desired signals and three interference signals. Therefore, we design the second layer of precoders $\mathbf{V}_{a}^{\text{IA}}$, $\mathbf{V}_{b}^{\text{IA}}$, $\mathbf{V}_{c}^{\text{IA}}$ such that the interference signals at receiver $\text{RX}_1$ are aligned at a space of dimension $|\mathbf{V}_{b}^{\text{IA}}|$. Similarly, the interference signals at receiver $\text{RX}_2$ are aligned at a space of dimension $|\mathbf{V}_{c}^{\text{IA}}|$, and the interference signals at receiver $\text{RX}_3$ are aligned at a space of dimension $|\mathbf{V}_{a}^{\text{IA}}|$. We can satisfy these alignment conditions by applying asymptotic interference alignment scheme as in~\cite{cadambe2009interference}. The main challenge here is that the derived channel coefficients are polynomial functions of the original channel coefficients. Thus, we prove that the interference alignment conditions are still feasible in our scenario in Appendix~\ref{App4}. For symmetry, we set $|\mathbf{V}_{a}^{\text{IA}}|=|\mathbf{V}_{b}^{\text{IA}}|=|\mathbf{V}_{c}^{\text{IA}}|=|\mathbf{V}|$. As a result, at each receiver, the desired signals occupy a space of dimension $3|\mathbf{V}|$ and after applying the alignment strategy, the interference occupies a space of dimension $|\mathbf{V}|$. Hence, if we consider $\mu_n\approx 4|\mathbf{V}|$ symbol extensions such that each desired signal spans a space linearly independently of the other desired signals and the interference spaces, then each receiver achieves a DoF of $\frac{3|\mathbf{V}|}{4|\mathbf{V}|}$. Thus, a total of $9/4$ sum-DoF is achievable in the first transmission block. In a similar manner, we repeat the same scheme in the second transmission block. Therefore, the sum-DoF of $9/4$ is achievable, overall.
  
\begin{table}[t]
\caption{The received directions of the subfiles at each receiver after the first layer of precoding}
\begin{center}
\begin{tabular}{ |c|c|c|c| }
\hline
\multirow{2}{*}{Subfiles}&\multicolumn{3}{ |c| }{Received directions}\\ \cline{2-4}
&$\text{RX}_1$ &$\text{RX}_2$&$\text{RX}_3$\\ 
\hline
$A_{12}^{1}$&$\mathbf{M}_{12}^{12}\mathbf{V}_{a}^{\text{IA}}$&$\mathbf{0}$&$\mathbf{M}_{23}^{12}\mathbf{V}_{a}^{\text{IA}}$\\
\hline
$A_{23}^{2}$&$\mathbf{M}_{12}^{23}\mathbf{V}_{a}^{\text{IA}}$&$\mathbf{0}$&$\mathbf{M}_{23}^{23}\mathbf{V}_{a}^{\text{IA}}$\\
\hline
$A_{13}^{3}$&$\mathbf{M}_{12}^{13}\mathbf{V}_{a}^{\text{IA}}$&$\mathbf{0}$&$\mathbf{M}_{23}^{13}\mathbf{V}_{a}^{\text{IA}}$\\
\hline
$B_{12}^{1}$&$\mathbf{M}_{13}^{12}\mathbf{V}_{b}^{\text{IA}}$&$\mathbf{M}_{23}^{12}\mathbf{V}_{b}^{\text{IA}}$&$\mathbf{0}$\\
\hline
$B_{23}^{2}$&$\mathbf{M}_{13}^{23}\mathbf{V}_{b}^{\text{IA}}$&$\mathbf{M}_{23}^{23}\mathbf{V}_{b}^{\text{IA}}$&$\mathbf{0}$\\
\hline
$B_{13}^{3}$&$\mathbf{M}_{13}^{13}\mathbf{V}_{b}^{\text{IA}}$&$\mathbf{M}_{23}^{13}\mathbf{V}_{b}^{\text{IA}}$&$\mathbf{0}$\\
\hline
$C_{12}^{1}$&$\mathbf{0}$&$\mathbf{M}_{12}^{12}\mathbf{V}_{c}^{\text{IA}}$&$\mathbf{M}_{13}^{12}\mathbf{V}_{c}^{\text{IA}}$\\
\hline
$C_{23}^{2}$&$\mathbf{0}$&$\mathbf{M}_{12}^{23}\mathbf{V}_{c}^{\text{IA}}$&$\mathbf{M}_{13}^{23}\mathbf{V}_{c}^{\text{IA}}$\\
\hline
$C_{13}^{3}$&$\mathbf{0}$&$\mathbf{M}_{12}^{13}\mathbf{V}_{c}^{\text{IA}}$&$\mathbf{M}_{13}^{13}\mathbf{V}_{c}^{\text{IA}}$\\
\hline
\end{tabular}
\end{center}
\label{Tab1}
\vspace{-10mm}
\end{table}

In the rest of the paper, we focus on proving the achievability of the lower bound on the sum-DoF in Theorem~\ref{Th1} for corner points $M=\tau N/K_T$ with $\tau\in\left[ K_T\right]$.

\section{Achievable Scheme}\label{Sec3}

In this section, we propose a new delivery scheme that achieves the lower bound in Theorem~\ref{Th1} for corner points of transmitter cache size $M=\tau N/K_T$ for $\tau\in\left[K_T\right]$. Without loss of generality, consider a demand vector $\mathbf{d}$ in which the receiver $\text{RX}_j$ requests file $W_{d_j}$ under the content placement strategy introduced in Subsection~\ref{Subsec2}. Thus, the transmitters should send the requested subfiles $\left\{W_{d_j,\mathcal{S}}:j\in\left[K_R\right],\mathcal{S}\subseteq\left[K_T\right],|\mathcal{S}|=\tau\right\}$ which means a total of $K_R{K_T\choose \tau}$ subfiles are delivered to the receivers.

\subsection{When $\tau\geq K_R$}
  The transmission is performed over ${K_T\choose \tau}$ time slots. At each slot, the set $\mathcal{S}$ of $\tau$ transmitters cooperates to send subfiles $\left\{W_{d_j,\mathcal{S}}\right\}_{j=1}^{K_R}$ to the receivers, where the network becomes as a MISO broadcast channel with a single transmitter of $\tau$ antennas and $K_R$ single-antenna receivers. Thus, beamforming and zero-forcing are sufficient to achieve $\mathsf{DoF}=K_R$, where the sum-DoF is bounded by the number of receivers $K_R$ (See Appendix~\ref{App1}).
  
   In the following, for a given set $\mathcal{G}$, we use $\Pi_{\mathcal{G}}$ to denote the set of $|\mathcal{G}|!$ permutations of set $\mathcal{G}$, and $\Pi_{\mathcal{G}}^{\text{circ}}$ to denote the set of $\left(|\mathcal{G}|-1\right)!$ circular permutations\footnote{A circular permutation of a set is an ordered selection of the set elements arranged along a fixed circle. For example, for the set $\mathcal{G}=\left\{1,2,3\right\}$, $\Pi_{\mathcal{G}}^{\text{circ}}=\left\{\left[1,2,3\right],\left[1,3,2\right]\right\}$. } of set $\mathcal{G}$. For a permutation $\pi\in\Pi_{\mathcal{G}}$, we use $\pi\left(i\right)=\pi\left(i+|\mathcal{G}|\right)$ to denote the $i$th element in permutation $\pi$, and $\pi\left[i:j\right]=\left[\pi\left(i\right),\ldots,\pi\left(j\right)\right]$ for $j\geq i$.

\subsection{When $\tau<K_R$}

 First, we split each subfile $W_{d_j,\mathcal{S}}$ for $j\in\left[K_R\right]$, $\mathcal{S}\subseteq\left[K_T\right]$ such that $|\mathcal{S}|=\tau$ into $\tau !\,\left(K_T-\tau\right)!$ disjoint smaller subfiles as follows
\begin{equation}~\label{eqn13}
W_{d_j,\mathcal{S}}\triangleq\left\{W_{d_j,\pi,\tilde{\pi}}^{\pi\left(1\right)}:\pi\in\Pi_{\mathcal{S}},\tilde{\pi}\in\Pi_{\left[K_T\right]\setminus\mathcal{S}}\right\},
\end{equation}   
where the index $\pi\left(1\right)$ of subfile $W_{d_j,\pi,\tilde{\pi}}^{\pi\left(1\right)}$ refers to the primary (master) transmitter $\text{TX}_{\pi\left(1\right)}$ responsible for delivering this subfile, while transmitters $\pi\left[2:\tau\right]$ represent the cooperative transmitters that have this subfile as a side information. Furthermore, subfile $W_{d_j,\pi,\tilde{\pi}}^{\pi\left(1\right)}$ is not available at the cache of transmitters $\text{TX}_i$, $i\in\tilde{\pi}$. This splitting strategy helps us to schedule the smaller subfiles required to be delivered to the receivers into $\left(K_T-1\right)!$ sets, where each set has $K_RK_T$ smaller subfiles that would be delivered in the same transmission block.

\begin{lemma}~\label{Lem1} The set of smaller subfiles needed to be delivered to the receivers can be partitioned into $\left(K_T-1\right)!$ disjoint sets each of size $K_RK_T$ smaller subfiles as follows
\begin{equation}
\bigcup_{\substack{\pi\in\Pi^{\text{circ}}_{\left[K_T\right]}}}\left\{W_{d_j,\pi\left[i:i+\tau-1\right],\pi\left[i+\tau:K_T+i-1\right]}^{\pi\left(i\right)}:i\in\left[K_T\right],j\in\left[K_R\right]\right\}.
\end{equation}
\end{lemma}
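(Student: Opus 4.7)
The plan is to prove Lemma~\ref{Lem1} by a straightforward bijection/counting argument between the collection of smaller subfiles produced by the splitting in~\eqref{eqn13} and the index set of the claimed partition. First, I would verify that the two cardinalities agree: for each receiver $j\in[K_R]$ and each $\tau$-subset $\mathcal{S}\subseteq[K_T]$, the splitting produces $\tau!(K_T-\tau)!$ smaller subfiles, giving a total of $K_R\binom{K_T}{\tau}\tau!(K_T-\tau)! = K_R\,K_T!$ smaller subfiles overall. On the right-hand side, the indexing set $\Pi^{\text{circ}}_{[K_T]}\times[K_T]\times[K_R]$ has cardinality $(K_T-1)!\cdot K_T\cdot K_R = K_R\,K_T!$, matching the left-hand side. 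Hence it suffices to exhibit an explicit bijection, from which disjointness of the $(K_T-1)!$ sets and the claimed size $K_TK_R$ per set both follow.

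Second, I would reformulate the labeling of smaller subfiles. Every smaller subfile $W_{d_j,\pi,\tilde\pi}^{\pi(1)}$ is uniquely encoded by the pair $(j,\sigma)$, where $\sigma\in\Pi_{[K_T]}$ is the full permutation obtained by concatenating $\pi$ and $\tilde\pi$: the first $\tau$ entries of $\sigma$ specify the master transmitter $\sigma(1)$ and the $\tau-1$ cooperating transmitters $\sigma(2),\ldots,\sigma(\tau)$, while the last $K_T-\tau$ entries specify the non-caching transmitters. Under this identification, the term $W_{d_j,\pi[i:i+\tau-1],\pi[i+\tau:K_T+i-1]}^{\pi(i)}$ appearing in the union corresponds to the cyclic rotation
\begin{equation*}
R^{\,i-1}\pi \;\triangleq\; \bigl(\pi(i),\pi(i+1),\ldots,\pi(i+K_T-1)\bigr),
\end{equation*}
with indices taken modulo $K_T$. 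So the right-hand side of the lemma is precisely the collection of pairs $(j, R^{i-1}\pi)$ as $(\pi,i,j)$ ranges over $\Pi^{\text{circ}}_{[K_T]}\times[K_T]\times[K_R]$.

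The crux is then the following structural fact: the cyclic group $\mathbb{Z}_{K_T}$ acts freely on $\Pi_{[K_T]}$, so its orbits all have size exactly $K_T$, producing $K_T!/K_T=(K_T-1)!$ orbits. Since $\Pi^{\text{circ}}_{[K_T]}$ is defined to be a complete set of orbit representatives (one ordering on the circle per equivalence class under rotation, as illustrated by the example in the paper), every $\sigma\in\Pi_{[K_T]}$ admits a \emph{unique} representation $\sigma = R^{\,i-1}\pi$ for some $\pi\in\Pi^{\text{circ}}_{[K_T]}$ and some $i\in[K_T]$. Combined with the free choice of $j\in[K_R]$, this yields the desired bijection between the set of all smaller subfiles and $\Pi^{\text{circ}}_{[K_T]}\times[K_T]\times[K_R]$, which simultaneously establishes that the sets in the partition are disjoint and that each has size $K_TK_R$.

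I expect the only conceptual obstacle to be verifying the free action, but this is immediate: if $R^{k}\pi=\pi$ for some $1\le k\le K_T-1$, then comparing the first coordinate gives $\pi(k+1)=\pi(1)$, which contradicts the fact that $\pi$ is a permutation of $K_T$ distinct indices. Once this is in hand, the lemma reduces to orbit counting and the bijection described above, with no further combinatorial manipulation required.
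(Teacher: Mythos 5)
Your proof is correct and proceeds along essentially the same lines as the paper's: the paper also counts $K_R\,K_T!$ smaller subfiles on each side, identifies each smaller subfile with a full permutation of $[K_T]$ (together with a receiver index), and argues that no full permutation arises from two distinct circular permutations by rotation. Your group-action/orbit framing --- $\mathbb{Z}_{K_T}$ acting freely on $\Pi_{[K_T]}$, with $\Pi^{\text{circ}}_{[K_T]}$ as a transversal --- is a cleaner way to package the same uniqueness claim the paper establishes by contradiction, but it is not a different argument.
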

\begin{proof}
The proof is presented in Appendix~\ref{App3}
\end{proof}

According to Lemma~\ref{Lem1}, we have $\left( K_T-1\right)!$ sets, where each set is defined by a unique permutation $\pi\in\Pi_{\left[K_T\right]}^{\text{circ}}$, and each set contains $K_R K_T$ smaller subfiles that we call messages from now. For a given $\pi\in\Pi_{\left[K_T\right]}^{\text{circ}}$, we sort the transmitters in the permutation order $\pi$. Thus, each transmitter $\text{TX}_{\pi\left(i\right)}$, $i\in\left[K_T\right]$, has a dedicated message intended to every receiver $\text{RX}_j$, $j\in\left[K_R\right]$. Furthermore, the messages assigned to transmitter $\text{TX}_{\pi\left(i\right)}$, $\lbrace W_{d_j,\pi\left[i:i+\tau-1\right],\pi\left[i+\tau:K_T+i-1\right]}^{\pi\left(i\right)}\rbrace_{j=1}^{K_R}$, are available at the adjacent $\tau-1$ cooperative transmitters $\lbrace \text{TX}_{l}:l\in\pi\left[i+1:i+\tau-1\right]\rbrace$, while these messages are not available at the remaining $\left( K_T-\tau\right)$ transmitters $\text{TX}_l$, $l\in\pi\left[i+\tau:K_T+i-1\right]$.
 For any arbitrary $n\in\mathbb{N}^{+}$, we consider each transmission block occurs over $\mu_n=K_T n^{\Gamma}+\left(K_R-\tau\right)\left(n+1\right)^{\Gamma}$ symbol extensions of the original channel, where $\Gamma=K_T\left(K_R-\tau\right)$. We design a coding scheme to deliver the messages in each block with a sum-DoF of \small$\frac{K_RK_Tn^{\Gamma}}{K_Tn^{\Gamma}\left(K_R-\tau\right)\left(n+1\right)^{\Gamma}}$. \normalsize Note that as $n\to\infty$, we obtain $\mathsf{DoF}=\frac{K_RK_T}{K_R+K_T-\tau}$. The main idea behind the proposed coding scheme in each transmission block is that each transmitter has $K_R$ messages, where each message is required by a dedicated receiver, and these messages are available at $\tau-1$ cooperative transmitters as a side information. Thus, we develop two-layer beamforming matrices at each transmitter. In the first layer, each transmitter with the help of the cooperative transmitters nulls out the message intended to a specific receiver into the neighboring $\tau-1$ receivers. Hence, for every receiver, there are $K_T$ desired messages, one from each transmitter, in addition to $K_T\left(K_R-\tau\right)$ interference messages. Then, we design an asymptotic interference alignment scheme in the second layer to let the interference signals span a $\left( K_R-\tau\right)$ dimensional space by aligning every $K_T$ interference signals into a single-dimensional space. Therefore, every receiver can decode the desired $K_T$ messages over $K_T+K_R-\tau$ dimensional spaces to get a DoF of $\frac{K_T}{K_T+K_R-\tau}$ per receiver.

Without loss of generality, let us focus on the transmission block for $\pi=\left[1,\ldots,K_T\right]$ for delivering subfiles $\left\{W_{d_j,\left[i:i+\tau-1\right],\left[i+\tau:K_T+i-1\right]}^{i}:i\in\left[K_T\right],j\in\left[K_R\right]\right\}$, 
where the delivery scheme of the remaining blocks is performed in the same manner. The input-output relation of the original channel in~\eqref{eqn1} over $\mu_n$-symbol extensions is given by
\begin{equation}
\mathbf{Y}_k=\sum_{i=1}^{K_T} \mathbf{H}_{ki}\mathbf{X}_i+\mathbf{N}_{k},
\end{equation}
where $\mathbf{Y}_k$ and $\mathbf{N}_{k}$ represent $\mu_n\times 1$ column vectors of the received signal and the Gaussian noise of the receiver $\text{RX}_k$ over $\mu_n$-symbol extension, respectively. $\mathbf{X}_i$ is a $\mu_n\times 1$ column vector representing the transmitted vector of transmitter $\text{TX}_i$. $\mathbf{H}_{ki}$ is a $\mu_n\times \mu_n$ diagonal matrix of channel coefficients from transmitter $\text{TX}_i$ to receiver $\text{RX}_k$ over $\mu_n$ symbol extension.
\begin{equation}
\mathbf{H}_{ki}=\begin{bmatrix}
h_{ki}\left(1\right)& 0 & \dots & 0\\
0 & h_{ki}\left(2\right) &\dots & 0\\
\vdots & \vdots & \ddots & \vdots\\
0& 0& \dots &h_{ki}\left(\mu_n\right) 
\end{bmatrix}.
\end{equation}

Each message $W_{d_j,\left[i:i+\tau-1\right],\left[i+\tau:K_T+i-1\right]}^{i}$, $i\in\left[K_T\right]$, $j\in\left[K_R\right]$, is encoded into $n^{\Gamma}$ independent streams represented by a $n^{\Gamma} \times 1$ column vector $\mathbf{X}_{j,\left[i:i+\tau-1\right]}$. Note that each vector $\mathbf{X}_{j,\left[i:i+\tau-1\right]}$ can be constructed at transmitters $\text{TX}_l$, $l\in\left[i:i+\tau-1\right]$, i.e., it can be sent from $\tau$ transmitters. Hence, transmitter $\text{TX}_l$ uses beamforming  matrix $\mathbf{V}_{j,\left[i:i+\tau-1\right]}^{l}$ for precoding the vector $\mathbf{X}_{j,\left[i:i+\tau-1\right]}$, where $\mathbf{V}_{j,\left[i:i+\tau-1\right]}^{l}$ is a $\mu_n\times n^{\Gamma}$ matrix. Therefore, we can describe the transmitted vector of transmitter $\text{TX}_{i}$ as follows:
\begin{equation}
\mathbf{X}_{i}=\sum_{j=1}^{K_R}\sum_{l=i-\tau+1}^{i} \mathbf{V}_{j,\left[l:l+\tau-1\right]}^{i} \mathbf{X}_{j,\left[l:l+\tau-1\right]},
\end{equation}   
where it can be verified that transmitter $\text{TX}_{i}$ is a cooperative transmitter for transmitters $l\in\left[i-\tau+1:i\right]$, i.e., transmitter $\text{TX}_{i}$ has the messages of transmitters $\text{TX}_l$, $l\in\left[i-\tau+1:i\right]$, as a side information. Furthermore, the received signal at receiver $\text{RX}_k$ is given by
\begin{equation}
\begin{aligned}
\mathbf{Y}_k&=  \sum_{i=1}^{K_T}\mathbf{H}_{ki}\left(\sum_{j=1}^{K_R}\sum_{l=i-\tau+1}^{i} \mathbf{V}_{j,\left[l:l+\tau-1\right]}^{i} \mathbf{X}_{j,\left[l:l+\tau-1\right]}\right) +\mathbf{N}_k\\
  &=  \sum_{j=1}^{K_R}\sum_{i=1}^{K_T}\sum_{l=i}^{i+\tau-1}\mathbf{H}_{kl}\mathbf{V}_{j,\left[i:i+\tau-1\right]}^{l} \mathbf{X}_{j,\left[i:i+\tau-1\right]}+\mathbf{N}_k.
\end{aligned}
\end{equation}
Note that the received signal spans a space of dimension $\mu_n=K_T n^{\Gamma}+\left(K_R-\tau\right)\left(n+1\right)^{\Gamma}$. Our objective is to design the beamforming matrices $\left\{\mathbf{V}_{j,\left[i:i+\tau-1\right]}^{l}\right\}$ such that the interference signals occupy a subspace of dimension $\left(K_R-\tau\right)\left(n+1\right)^{\Gamma}$, leaving the desired signals to span an interference-free subspace of dimension $K_Tn^{\Gamma}$ out of $\mu_n$ dimensions. We propose the beamforming matrix $\mathbf{V}_{j,\left[i:i+\tau-1\right]}^{l}$ comprising of two layers (two precoder matrices) as follows
\begin{equation}~\label{eqn2}
\mathbf{V}_{j,\left[i:i+\tau-1\right]}^{l}=\mathbf{V}_{j,\left[i:i+\tau-1\right]}^{l,\text{ZF}} \mathbf{V}_{j,\left[i:i+\tau-1\right]}^{\text{IA}},
\end{equation}
where $\mathbf{V}_{j,\left[i:i+\tau-1\right]}^{l,\text{ZF}}$ is a $\mu_n\times\mu_n$ diagonal matrix designed based on zero-forcing technique. The objective of these precoders is to null out the signals $\lbrace\mathbf{X}_{j,\left[i:i+\tau-1\right]}\rbrace_{i=1}^{K_T}$, intended to receiver $\text{RX}_j$ from all transmitters, at the neighboring $\tau-1$ undesired receivers; however, these signals still cause interference at the remaining $\left( K_R-\tau\right)$ receivers. Observe that the precoders $\lbrace\mathbf{V}_{j,\left[i:i+\tau-1\right]}^{l,\text{ZF}}\rbrace$ are different at each transmitter $\text{TX}_l$, $l\in\left[i:i+\tau-1\right]$, i.e., $\mathbf{V}_{j,\left[i:i+\tau-1\right]}^{l,\text{ZF}}\neq \mathbf{V}_{j,\left[i:i+\tau-1\right]}^{\tilde{l},\text{ZF}}$ for $\left(l,\tilde{l}\right)\in\left[i:i+\tau-1\right]$ and $l\neq \tilde{l}$. The second precoder $\mathbf{V}_{j,\left[i:i+\tau-1\right]}^{\text{IA}}$ is a $\mu_n\times n^{\Gamma}$ matrix designed to apply asymptotic interference alignment such that the signals intended to receiver $\text{RX}_j$ from all transmitters are aligned at $\left(K_R-\tau\right)$ undesired receivers into a single space of dimension  $\left(n+1\right)^{\Gamma}$, where this precoder is the same at all transmitters $\text{TX}_l$, $l\in\left[i:i+\tau-1\right]$. In the following, we give the design of these two-layers of precoders in detail.

\subsubsection{Design of the Zero-forcing Precoder} 

Given the beamforming matrix design in~\eqref{eqn2}, we can rewrite the received signal at receiver $\text{RX}_k$ as
\begin{equation}
\mathbf{Y}_k=\sum_{j=1}^{K_R}\sum_{i=1}^{K_T}\left(\sum_{l=i}^{i+\tau-1}\mathbf{H}_{kl}\mathbf{V}_{j,\left[i:i+\tau-1\right]}^{l,\text{ZF}}\right) \mathbf{V}_{j,\left[i:i+\tau-1\right]}^{\text{IA}}\mathbf{X}_{j,\left[i:i+\tau-1\right]}+\mathbf{N}_k.
\end{equation}
Observe that the signal $\mathbf{X}_{j,\left[i:i+\tau-1\right]}$ intended to receiver $\text{RX}_j$ is available at $\tau$ transmitters, and hence, these transmitters can cooperate with each other to null out this signal at $\tau-1$ unintended receivers. Thus, we aim to design precoders $\lbrace\mathbf{V}_{j,\left[i:i+\tau-1\right]}^{l,\text{ZF}} \rbrace$ such that the signal $\mathbf{X}_{j,\left[i:i+\tau-1\right]}$ does not cause interference at receivers $\text{RX}_k$, $k\in\left[j+1:j+\tau-1\right]$. For given $i\in\left[K_T\right]$, it is required for precoders $\lbrace\mathbf{V}_{j,\left[i:i+\tau-1\right]}^{l,\text{ZF}}\rbrace$, $l\in\left[i:i+\tau-1\right]$, to satisfy the conditions 
 
 \small
\begin{equation}~\label{eqn3}
\sum_{l=i}^{i+\tau-1} \mathbf{H}_{kl} \mathbf{V}_{j,\left[i:i+\tau-1\right]}^{l,\text{ZF}}=\begin{bmatrix}
\sum_{\substack{l=i}}^{i+\tau-1}h_{kl}\left(1\right)V_{j,\left[i:i+\tau-1\right]}^{l,\text{ZF}}\left(1\right)&\dots&0\\
\vdots&\ddots&\vdots\\
0&\dots&\sum_{l=i}^{i+\tau-1}h_{kl}\left(\mu_n\right)V_{j,\left[i:i+\tau-1\right]}^{l,\text{ZF}}\left(\mu_n\right)
\end{bmatrix}=\mathbf{0}_{\mu_n\times \mu_n}
\end{equation}
\normalsize
for all $k\in\left[j+1:j+\tau-1\right]$, where $V_{j,\left[i:i+\tau-1\right]}^{l,\text{ZF}}\left(u\right)$ is the $u$-th diagonal element of the precoder matrix $\mathbf{V}_{j,\left[i:i+\tau-1\right]}^{l,\text{ZF}}$. Let $\mathbf{H}\left(u\right)$ denote the $K_R\times K_T$ channel matrix between $K_T$ transmitters and $K_R$ receivers at time slot $u\in\left[\mu_n\right]$. Moreover, $\mathbf{H}_{\mathcal{S}_R}^{\mathcal{S}_T}\left(u\right)$ represents $|\mathcal{S}_R|\times |\mathcal{S}_T|$ submatrix of the channel matrix $\mathbf{H}\left(u\right)$ formed by taking rows indexed by $\mathcal{S}_R$ and columns indexed by $\mathcal{S}_T$. For arbitrary vector $\mathbf{a}=\left(a_1,\cdots,a_\tau\right)$, we consider the determinant of the following matrix using cofactor expansion
\begin{equation}
\mathit{det}\begin{bmatrix}
h_{j+1,i}\left(u\right)&\dots & h_{j+1,i+\tau-1}\left(u\right)\\
\vdots&&\vdots\\
h_{j+\tau-1,i}\left(u\right)&\dots & h_{j+\tau-1,i+\tau-1}\left(u\right)\\
a_1&\dots&a_{\tau}
\end{bmatrix}=\sum_{l=1}^{\tau}a_l c_{l}\left(u\right),
\end{equation} 
where $c_l\left(u\right)$ is the cofactor of element $a_l$. By taking $V_{j,\left[i:i+\tau-1\right]}^{l,\text{ZF}}\left(u\right)=c_l\left(u\right)$, we can verify that
\begin{equation}
\begin{aligned}
\sum_{l=i}^{i+\tau-1}h_{kl}\left(u\right)V_{j,\left[i:i+\tau-1\right]}^{l,\text{ZF}}\left(u\right)&=\mathit{det}\begin{bmatrix}
h_{j+1,i}\left(u\right)&\dots & h_{j+1,i+\tau-1}\left(u\right)\\
\vdots&&\vdots\\
h_{j+\tau-1,i}\left(u\right)&\dots & h_{j+\tau-1,i+\tau-1}\left(u\right)\\
h_{ki}\left(u\right)&\dots&h_{k,i+\tau-1}\left(u\right)
\end{bmatrix}\\
&=\left\{\begin{array}{ll}
0 & \text{if}\ k\in\left[j+1:j+\tau-1\right]\\
M_{\lbrace k\rbrace\cup\left[j+1:j+\tau-1\right]}^{\left[i:i+\tau-1\right]}\left(u\right)& \text{otherwise}
\end{array}\right.
\end{aligned}
\end{equation}
where $M_{\lbrace k\rbrace\cup\left[j+1:j+\tau-1\right]}^{\left[i:i+\tau-1\right]}\left(u\right)$ is the determinant of the submatrix $\mathbf{H}_{\lbrace k\rbrace\cup\left[j+1:j+\tau-1\right]}^{\left[i:i+\tau-1\right]}\left(u\right)$. Therefore, the conditions in~\eqref{eqn3} are satisfied. Thus, the received signal at receiver $\text{RX}_k$ can be rewritten as
\begin{equation}~\label{eqn4}
\mathbf{Y}_k=\sum_{i=1}^{K_T}\mathbf{M}_{\left[ k:k+\tau-1\right]}^{\left[i:i+\tau-1\right]}\mathbf{V}_{k,\left[i:i+\tau-1\right]}^{\text{IA}}+\sum_{i=1}^{K_T}\sum_{j=k+1}^{K_R+k-\tau}\mathbf{M}_{\lbrace k\rbrace \cup\left[j+1:j+\tau-1\right]}^{\left[i:i+\tau-1\right]}\mathbf{V}_{j,\left[i:i+\tau-1\right]}^{\text{IA}} + \mathbf{N}_k,
\end{equation}
where $\mathbf{M}_{\lbrace k\rbrace \cup\left[j+1:j+\tau-1\right]}$ is a $\mu_n\times \mu_n$ diagonal matrix with the $u$-th diagonal element $M_{\lbrace k\rbrace \cup\left[j+1:j+\tau-1\right]}^{\left[i:i+\tau-1\right]}\left(u\right)$ for $u\in\left[\mu_n\right]$. As a result, the signals intended to receiver $\text{RX}_j$ from all transmitters are canceled at receivers $k\in\left[j+1:j+\tau-1\right]$, while these signals interfere at receivers $\text{RX}_k$, $k\in\left[j+\tau:K_R+j-1\right]$.
\begin{remark}(\textit{Full rank of the first-layer precoder}): Note that the entries of the channel matrix $\mathbf{H}\left(u\right)$ are drawn i.i.d. from a continuous distribution. Hence, $\mathbf{H}\left(u\right)$ has full rank with probability one for all $u\in\left[\mu_n\right]$. Furthermore, any square submatrix from $\mathbf{H}\left(u\right)$ also has full rank with probability one. Thus, cofactors $\lbrace c_l\left(u\right)\rbrace_{u=1}^{\mu_n}$ have non-zero values, and they are independent of each other, since they are formed from i.i.d. channel coefficients. Therefore, the first-layer precoder $\mathbf{V}_{j,\left[i:i+\tau-1\right]}^{l,\text{ZF}}$ is a full rank matrix almost surely.
\end{remark}

\subsubsection{Design of the Interference Alignment Precoder}
 The received signal at receiver $\text{RX}_k$ given in~\eqref{eqn4} consists of two terms: the first term represents $K_T$ desired data streams, and the second term is the interference signals from the data streams intended to receivers $\text{RX}_j$, $j\in\left[k+1:K_R+k-\tau\right]$. Our objective is to align the interference signals at receiver $\text{RX}_k$ into a subspace of dimension $\left(K_R-\tau\right)\left(n+1\right)^{\Gamma}$ in order to let the remaining $\mu_n-\left(K_R-\tau\right)\left(n+1\right)^{\Gamma}=K_Tn^{\Gamma}$ dimensions to the desired signals. Consider the signals desired by receiver $\text{RX}_j$ that causes interference at $\left( K_R-\tau\right)$ receivers $\text{RX}_k$, $k\in\left[j+\tau:K_R+j-1\right)]$. We would like to align these signals into a single subspace of dimension $\left(n+1\right)^{\Gamma}$ at receivers $\text{RX}_k$, $k\in\left[j+\tau:K_R+j-1\right)]$. Towards this objective, precoders $\mathbf{V}_{j,\left[i:i+\tau-1\right]}^{\text{IA}}$, $i\in\left[K_T\right]$, are chosen to satisfy the following alignment conditions.
\begin{equation}
\begin{aligned}
\mathbf{M}_{\lbrace k\rbrace\cup\left[j+1:j+\tau-1\right]}^{\left[i:i+\tau-1\right]}\mathbf{V}_{j,\left[i:i+\tau-1\right]}^{\text{IA}} \prec \mathbf{V}_{j}^{\text{IA}},\qquad\forall i\in\left[K_T\right],\ \forall k\in\left[j+\tau:K_R+j-1\right],
\end{aligned}
\end{equation}    
where $\mathbf{P}\prec \mathbf{Q}$ means that the column space of matrix $\mathbf{P}$ is a subspace of the column space of the matrix $\mathbf{Q}$. First, we choose
\begin{equation}
\mathbf{V}_{j,\left[i:i+\tau-1\right]}^{\text{IA}}=\mathbf{V}_{j1}^{\text{IA}}.
\end{equation}
Therefore, the problem is reduced to finding precoders $\mathbf{V}_{j1}^{\text{IA}}$, $\mathbf{V}_{j}^{\text{IA}}$ for all $j\in\left[K_R\right]$ satisfying 
\begin{equation} ~\label{eqn5}
\begin{aligned}
\mathbf{M}_{\lbrace k\rbrace\cup\left[j+1:j+\tau-1\right]}^{\left[i:i+\tau-1\right]}\mathbf{V}_{j1}^{\text{IA}} \prec \mathbf{V}_{j}^{\text{IA}},\quad\forall i\in\left[K_T\right],\ \forall k\in\left[j+\tau:K_R+j-1\right].
\end{aligned}
\end{equation} 
In~\eqref{eqn5}, there are $\Gamma=K_T\left(K_R-\tau\right)$ conditions that $\mathbf{V}_{j1}^{\text{IA}}$ and $\mathbf{V}_{j}^{\text{IA}}$ are required to satisfy. First, we generate random vectors $\mathbf{a}_{j}=\left[a_{j}\left(1\right),\ldots,a_{j}\left(\mu_n\right)\right]^T$, $j\in\left[K_R\right]$, $i\in\left[K_T\right]$, where its elements are i.i.d. drawn from a continuous distribution. Then, we choose  
\begin{equation}~\label{eqn20}
\begin{aligned}
&\mathbf{V}_{j}^{\text{IA}}&&=\left\{\prod_{i=1}^{K_T}\prod_{k=j+\tau}^{K_R+j-1}\left(\mathbf{M}_{\lbrace k\rbrace \cup \left[j+1:j+\tau-1\right]}^{\left[i:i+\tau-1\right]}\right)^{\alpha_{j}^{\left[k,i\right]}}\mathbf{a}_{j}:0\leq\alpha_{j}^{\left[k,i\right]}\leq n\right\},\\
&\mathbf{V}_{j1}^{\text{IA}}&&=\left\{\prod_{i=1}^{K_T}\prod_{k=j+\tau}^{K_R+j-1}\left(\mathbf{M}_{\lbrace k\rbrace \cup \left[j+1:j+\tau-1\right]}^{\left[i:i+\tau-1\right]}\right)^{\alpha_{j}^{\left[k,i\right]}}\mathbf{a}_{j}:0\leq\alpha_{j}^{\left[k,i\right]}\leq n-1\right\}.
\end{aligned}
\end{equation}     
Thus, we can verify that the conditions in~\eqref{eqn5} are satisfied and the precoder matrices $\mathbf{V}_{j}^{\text{IA}}$ and $\mathbf{V}_{j1}^{\text{IA}}$, $j\in\left[K_R\right]$, have full column rank of $\left(n+1\right)^{\Gamma}$ and $n^{\Gamma}$, respectively. The proof is presented in Appendix~\ref{App4}. Now consider receiver $\text{RX}_k$. The desired streams of receiver $\text{RX}_k$ have arrived with directions of $K_Tn^{\Gamma}$ column vectors
\begin{equation}
\mathbf{D}_k=\left[\mathbf{M}_{\left[k:k+\tau-1\right]}^{\left[1:\tau\right]}\mathbf{V}_{k1}^{\text{IA}}\quad \mathbf{M}_{\left[k:k+\tau-1\right]}^{\left[2:\tau+1\right]}\mathbf{V}_{k1}^{\text{IA}}\quad\cdots\quad \mathbf{M}_{\left[k:k+\tau-1\right]}^{\left[K_T:K_T+\tau-1\right]}\mathbf{V}_{k1}^{\text{IA}}\right],
\end{equation}
while the interference signals have arrived after alignment with directions of $\left(K_R-\tau\right)\left(n+1\right)^{\Gamma}$ column vectors
\begin{equation}
\begin{aligned}
\mathbf{I}_k&=\left[\mathbf{V}_{j}^{\text{IA}}\right],\quad j\in\left[k+1:K_R+k-\tau\right] \\
&=\left[\mathbf{V}_{k+1}^{\text{IA}}\quad\cdots\quad \mathbf{V}_{K_R+k-\tau}^{\text{IA}}\right]
\end{aligned}
\end{equation}
To ensure that the receiver $\text{RX}_k$ can decode the desired streams, we should maintain that the directions of all desired streams are linearly independent of each other and independent of all directions of the interference streams. This can be ensured if the following matrix 
\begin{equation}~\label{eqn22}
\mathbf{R}_k=\left[\mathbf{D}_k\quad \mathbf{I}_k\right]
\end{equation}
has full rank of $\mu_n$ almost surely for all channel realizations. We left the proof of full rank to Appendix~\ref{App4}. As a result, each receiver $\text{RX}_k$ can decode $K_Tn^{\Gamma}$ desired streams over $\mu_n$-symbol extensions, and hence, a total of \small$\frac{K_RK_Tn^{\Gamma}}{K_Tn^{\Gamma}+\left(K_R-\tau\right)\left(n+1\right)^{\Gamma}}$ \normalsize DoF is achievable in each transmission block.

\section{Conclusion}~\label{Sec4}
In this paper, we have studied the degrees of freedom of interference networks with caches at the transmitters. Our main result is the characterization of the sum degrees of freedom of cache-aided interference channels within a multiplicative factor of $2$ from the information-theoretic bound, independent of all system parameters. To achieve this result, we have proposed a novel delivery scheme for arbitrary number of transmitters and receivers that outperforms state-of-the-art schemes in literature, for almost all transmitter cache sizes. The achievability proof depends mainly on scheduling the messages needed to be delivered to receivers into groups, where the groups are delivered in subsequent transmission blocks. Then, we construct two-layer precoding matrices based on ZF and IA techniques to deliver each group of messages in each transmission block. Moreover, we have derived an upper bound on the sum-DoF of delivery schemes that uses the uncoded caching scheme. We have shown that the derived upper bound is tighter than the upper bound that exists in the literature under small cache sizes. Even with these improvements, there still exists a multiplicative gap of at most a factor of $2$ between the achievable scheme and upper bounds. It is an interesting point to find more sophisticated bounds to narrow this gap.

\appendices
\section{Proof of Multiplicative Factor $2$}~\label{App2}
In this Appendix, we present the converse proof of Theorem~\ref{Th1}. We first derive a simple upper bound on the sum-DoF for general cache-aided networks without any restrictions on the caching and delivery schemes. Then, we show that the multiplicative gap between the achievable lower bound in Theorem~\ref{Th1} and the derived upper bound is less than $2$ independent of all system parameters. Assume that there exists full cooperation among $K_T$ transmitters to construct a single transmitter of $K_T$ antennas that has access to cache memories of all $K_T$ transmitters. Moreover, assume that $K_R$ receivers are allowed to fully cooperate with each other to construct a single receiver with $K_R$ antennas. Note that this cooperation can not reduce the sum-DoF, and hence, this assumption does not violate the upper bound. The constructed multi-antenna transmitter has the $K_R$ requested files of the single, multi-antenna receiver. Thus, the system becomes a point-to-point MIMO channel~\cite{zheng2003diversity}. Hence the total sum-DoF is bounded by
\begin{equation}
\mathsf{DoF}^{*}\leq \min\left\{K_T,K_R\right\}.
\end{equation}   
Let $\tau=K_TM/N$. When $\tau\geq \min\left\{K_T,K_R\right\}$, it is easily to see that both the upper and lower bounds are the same, i.e., the achievable scheme is optimal in this case.

When $1\leq \tau\leq\min\left\{K_T,K_R\right\}$, the proposed coding scheme achieves $\mathsf{DoF}=\frac{K_TK_R}{K_T+K_R-\tau}$. Hence, the multiplicative gap between the upper bound and the achievable DoF is given by
\begin{equation}
\begin{aligned}
\frac{\mathsf{DoF}^{*}}{\mathsf{DoF}}&= \min\left\{K_T,K_R\right\} / \frac{K_TK_R}{K_T+K_R-\tau}\leq 2.
\end{aligned}
\end{equation}
Therefore, the achievable sum-DoF is within a factor of $2$ from the derived upper bound.  

\vspace{-3mm}
\section{Converse of Cooperative $X$-network}~\label{App1}
Here, we present an upper bound on the sum-DoF of the cooperative $X$-network defined in Subsection~\ref{Subsec3}. The upper bound is mainly based on genie-aided and cut-set arguments. In such cooperative $X$-networks, we have $K_R{K_T\choose \tau}$ messages defined by $$\mathcal{W}\triangleq\left\{W_{j,\mathcal{S}}:j\in\left[K_R\right],\mathcal{S}\subseteq\left[K_T\right],|\mathcal{S}|=\tau\right\},$$  
such that each subset $\mathcal{S}\subseteq \left[K_T\right]$ of $\tau$ transmitters has a dedicated message to every receiver $\text{RX}_j$, $j\in\left[K_R\right]$. Let $R_{j,\mathcal{S}}\left(P\right)=\frac{|W_{j,\mathcal{S}}|}{T}$ denote the rate of the codeword encoding the message $W_{j,\mathcal{S}}$, where the DoF associated with message $W_{j,\mathcal{S}}$ is given by $d_{j,\mathcal{S}}=\lim_{P\to\infty}\frac{R^{*}_{j,\mathcal{S}}\left(P\right)}{\log\left(P\right)}.$   
Thus, the sum-DoF of the cooperative $X$-network is given by $\mathsf{DoF}_{\sum}=\sum_{\substack{\mathcal{S}\subseteq\left[K_T\right]\\|\mathcal{S}|=\tau}}\sum_{j\in\left[K_R\right]} d_{j,\mathcal{S}}$. Let $\mathbf{Y}_{\mathcal{K}}$ be a $|\mathcal{K}| T\times 1$ concatenated vector of the received signals of receivers $\text{RX}_j$, $j\in\mathcal{K}$, and $\mathbf{X}_{\mathcal{K}}$ be a $|\mathcal{K}| T\times 1$ concatenated vector of the transmitted signals of transmitters in set $\mathcal{K}$. Moreover, we consider $\mathbf{H}$ is a $K_R T\times K_T T$ channel matrix between transmitters and receivers over $T$ slots, and $\mathbf{H}_{\mathcal{K}_r}^{\mathcal{K}_t}$ is a $|\mathcal{K}_r|T\times |\mathcal{K}_t| T$ submatrix of $\mathbf{H}$.

\begin{lemma} The degrees of freedom of cooperative $X$-networks satisfies 
\begin{equation}~\label{eqn11}
\sum_{\substack{\mathcal{S}\subseteq\left[K_T\right]\\|\mathcal{S}|=\tau}}\sum_{j\in\mathcal{S}_r}d_{j,\mathcal{S}}+\sum_{\substack{\mathcal{S}\subseteq\mathcal{S}_t\\|\mathcal{S}|=\tau}}\sum_{j\in\mathcal{\overline{S}}_r}d_{j,\mathcal{S}}\leq \sigma
\end{equation} 
for a subset $\mathcal{S}_{t}$ of transmitters such that $|\mathcal{S}_{t}|=\sigma$ and a subset $\mathcal{S}_{r}$ of receivers such that $|\mathcal{S}_{r}|=\sigma$, where $\tau\leq\sigma\leq\min\left\{K_T,K_R\right\}$. Moreover $\mathcal{\overline{S}}_t=\left[K_T\right]\setminus\mathcal{S}_t$, and $\mathcal{\overline{S}}_r=\left[K_R\right]\setminus\mathcal{S}_r$. 
\end{lemma}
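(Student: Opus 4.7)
The plan is a genie-aided cut-set converse tailored to the cooperative structure. The key move is to let the receivers in $\mathcal{S}_r$ fully cooperate into a single $\sigma$-antenna super-receiver, which cannot reduce the achievable DoF, and then to hand this super-receiver as genie side information the set $\mathcal{B}=\{W_{j,\mathcal{S}}:j\in\overline{\mathcal{S}}_r,\,\mathcal{S}\not\subseteq\mathcal{S}_t\}$ of messages that are \emph{not} counted on the LHS of the claimed inequality. Write the LHS message set as $\mathcal{A}=\mathcal{A}_1\cup\mathcal{A}_2$, with $\mathcal{A}_1=\{W_{j,\mathcal{S}}:j\in\mathcal{S}_r\}$ corresponding to the first sum and $\mathcal{A}_2=\{W_{j,\mathcal{S}}:j\in\overline{\mathcal{S}}_r,\,\mathcal{S}\subseteq\mathcal{S}_t\}$ to the second.

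The crucial step is showing that, given $\mathbf{Y}_{\mathcal{S}_r}$ and the side information $\mathcal{B}$, the super-receiver can reliably decode all of $\mathcal{A}$ in the DoF sense. Messages in $\mathcal{A}_1$ are recovered for free, since each individual receiver $j\in\mathcal{S}_r$ already decodes $\{W_{j,\mathcal{S}}\}_{\mathcal{S}}$ using the original code. For the second sum I will exploit the following combinatorial identity: for every transmitter $i\in\overline{\mathcal{S}}_t$ and every message $W_{j,\mathcal{S}}$ with $i\in\mathcal{S}$, the condition $i\notin\mathcal{S}_t$ forces $\mathcal{S}\not\subseteq\mathcal{S}_t$, so if $j\in\overline{\mathcal{S}}_r$ then $W_{j,\mathcal{S}}\in\mathcal{B}$, while if $j\in\mathcal{S}_r$ then $W_{j,\mathcal{S}}\in\mathcal{A}_1$. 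Hence, once $\mathcal{A}_1$ is decoded, every message held by the transmitters in $\overline{\mathcal{S}}_t$ is available to the super-receiver, which can reconstruct $\mathbf{X}_{\overline{\mathcal{S}}_t}$ exactly and subtract it from $\mathbf{Y}_{\mathcal{S}_r}$. The residual signal is $\tilde{\mathbf{Y}}_{\mathcal{S}_r}=\mathbf{H}_{\mathcal{S}_r}^{\mathcal{S}_t}\mathbf{X}_{\mathcal{S}_t}+\mathbf{N}_{\mathcal{S}_r}$, a $\sigma\times\sigma$ MIMO channel whose block-diagonal channel matrix is invertible almost surely, because its per-slot $\sigma\times\sigma$ blocks are formed from i.i.d.\ continuous channel coefficients. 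Inverting recovers $\mathbf{X}_{\mathcal{S}_t}$ up to additive noise with power independent of $P$; combined with the known $\mathbf{X}_{\overline{\mathcal{S}}_t}$, the super-receiver can synthesize noisy copies of $\mathbf{Y}_j$ for every $j\in\overline{\mathcal{S}}_r$ whose noise remains bounded in power and therefore yields the same DoF as the original Gaussian noise. Feeding these into the original decoders at the receivers in $\overline{\mathcal{S}}_r$ recovers all their intended messages, in particular $\mathcal{A}_2$.

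The remainder is a routine Fano and MIMO-capacity calculation. Independence of the library messages gives $H(\mathcal{A})=H(\mathcal{A}\mid\mathcal{B})$; Fano's inequality, together with the decodability just established, gives $H(\mathcal{A}\mid\mathbf{Y}_{\mathcal{S}_r},\mathcal{B},\mathbf{H})\leq T\epsilon_T$ with $\epsilon_T\to 0$; and the $\sigma$-antenna capacity bound
\begin{equation*}
I(\mathcal{A};\mathbf{Y}_{\mathcal{S}_r}\mid\mathcal{B},\mathbf{H})=h(\mathbf{Y}_{\mathcal{S}_r}\mid\mathcal{B},\mathbf{H})-h(\mathbf{N}_{\mathcal{S}_r})\leq \sigma T\log P+o(T\log P)
\end{equation*}
follows because $\mathbf{Y}_{\mathcal{S}_r}$ is $\sigma$-dimensional per channel use with per-component power $O(P)$. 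Chaining the two bounds, using $H(\mathcal{A})=\sum_{W\in\mathcal{A}}H(W)$, dividing by $T\log P$, and sending $P\to\infty$ produces $\sum_{W\in\mathcal{A}}d_W\leq\sigma$, which is exactly the inequality stated.

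The main obstacle I anticipate is the decodability of $\mathcal{A}_2$: we do not have direct access to $\mathbf{Y}_{\overline{\mathcal{S}}_r}$, so the argument must carefully justify that exact recovery of $\mathbf{X}_{\mathcal{S}_t}$ modulo $P$-independent noise through the invertible $\sigma\times\sigma$ MIMO cut is enough to rerun the original decoders of $\overline{\mathcal{S}}_r$. This rests on the full-rank property of the channel submatrix (which holds almost surely thanks to the continuous channel distribution) and on DoF invariance under additive noise of bounded power, the latter being what separates the converse for a cooperative $X$-network from a plain single-receiver cut-set bound.
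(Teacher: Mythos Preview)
Your proposal is correct and follows essentially the same approach as the paper: give the cooperating receivers $\mathcal{S}_r$ the genie side information $\mathcal{B}=\{W_{j,\mathcal{S}}:j\in\overline{\mathcal{S}}_r,\ \mathcal{S}\not\subseteq\mathcal{S}_t\}$, decode the messages destined to $\mathcal{S}_r$, reconstruct $\mathbf{X}_{\overline{\mathcal{S}}_t}$ from these decoded messages together with $\mathcal{B}$, subtract, invert the almost surely full-rank $\sigma\times\sigma$ block $\mathbf{H}_{\mathcal{S}_r}^{\mathcal{S}_t}$ (with noise reduction), synthesize degraded copies of the signals at $\overline{\mathcal{S}}_r$, and finish with Fano and the $\sigma$-antenna MAC/MIMO bound. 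The only cosmetic difference is that the paper synthesizes the \emph{subtracted} signals $\tilde{\mathbf{Y}}_{\overline{\mathcal{S}}_r}=\mathbf{H}_{\overline{\mathcal{S}}_r}^{\mathcal{S}_t}\mathbf{X}_{\mathcal{S}_t}+\mathbf{Z}_{\overline{\mathcal{S}}_r}$ (and correspondingly also hands $\mathcal{B}\cup\mathcal{A}_1$ to the receivers in $\overline{\mathcal{S}}_r$ so they can decode from $\tilde{\mathbf{Y}}$), whereas you synthesize the full $\mathbf{Y}_{\overline{\mathcal{S}}_r}$ by adding back the known $\mathbf{H}_{\overline{\mathcal{S}}_r}^{\overline{\mathcal{S}}_t}\mathbf{X}_{\overline{\mathcal{S}}_t}$; both routes are equivalent.
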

\begin{proof}

We first define the following three disjoint subsets of messages: 
\begin{equation}
\begin{aligned}
&\mathcal{W}_{\mathcal{S}_t}&&=\left\{W_{j,\mathcal{S}}: j\in\left[K_R\right],\mathcal{S}\subseteq\mathcal{S}_t, |\mathcal{S}|=\tau\right\},\\
&\mathcal{W}_{\mathcal{S}_r}&&=\left\{W_{j,\mathcal{S}}: j\in\mathcal{S}_r,\mathcal{S}\subseteq\left[K_T\right],\mathcal{S}\not\subseteq\mathcal{S}_t, |\mathcal{S}|=\tau\right\},\\
&\mathcal{\overline{\mathcal{W}}}&&=\left\{W_{j,\mathcal{S}}: j\in\mathcal{\overline{S}}_r\ \text{and}\ \mathcal{S}\not\subseteq\mathcal{S}_t, |\mathcal{S}|=\tau\right\}.
\end{aligned}
\end{equation}
Now assume that a genie provides receivers in set $\mathcal{S}_r$ with messages $\mathcal{\overline{W}}$, and receivers in set $\mathcal{\overline{S}}_r$ with messages $\mathcal{\overline{W}}\bigcup\mathcal{W}_{\mathcal{S}_r}$. Since the messages $\mathcal{\overline{W}}$ are given to all receivers, we are only concerned of the degrees of freedom of the remaining messages $\mathcal{W}_{\mathcal{S}_t}\bigcup\mathcal{W}_{\mathcal{S}_r}$. Assume that there is full cooperation between receivers $\text{RX}_j$, $j\in\mathcal{S}_r$. The main idea of the proof is to show that the set $\mathcal{S}_r$ of $\sigma$ receivers can decode the messages $\mathcal{W}_{\mathcal{S}_r}\bigcup \mathcal{W}_{\mathcal{S}_t}$ from its received signals using messages $\mathcal{\overline{W}}$ as a side information. We represent the received signals of $\mathcal{S}_r$ and $\mathcal{\overline{S}}_r$ receivers as 
\begin{equation}
\begin{aligned}
&\mathbf{Y}_{\mathcal{S}_r}&=\mathbf{H}_{\mathcal{S}_r}^{\mathcal{S}_t}\mathbf{X}_{\mathcal{S}_t}+\mathbf{H}_{\mathcal{S}_r}^{\mathcal{\overline{S}}_t}\mathbf{X}_{\mathcal{\overline{S}}_t}+\mathbf{Z}_{\mathcal{S}_r},\\
&\mathbf{Y}_{\mathcal{\overline{S}}_r}&=\mathbf{H}_{\mathcal{\overline{S}}_r}^{\mathcal{S}_t}\mathbf{X}_{\mathcal{S}_t}+\mathbf{H}_{\mathcal{\overline{S}}_r}^{\mathcal{\overline{S}}_t}\mathbf{X}_{\mathcal{\overline{S}}_t}+\mathbf{Z}_{\mathcal{\overline{S}}_r}.\\
\end{aligned}
\end{equation}  
Consider receivers in set $\mathcal{S}_r$ that can decode messages $\mathcal{W}_{\mathcal{S}_r}$. Using genie-aided messages $\mathcal{\overline{W}}$ and decoded messages $\mathcal{W}_{\mathcal{S}_r}$, receivers $j\in\mathcal{S}_r$ can compute the transmitted signals $\mathbf{X}_i$, $i\in\mathcal{\overline{S}}_t$ and subtract it from the received signal. Similarly, receivers in set $\mathcal{\overline{S}}_r$ have messages $\mathcal{\overline{W}}\bigcup \mathcal{W}_{\mathcal{S}_r}$ as a side information, and hence, they can compute the transmitted signals $\mathbf{X}_i$, $i\in\mathcal{\overline{S}}_t$ and subtract it from the received signal. Therefore, we can rewrite the signals of $\mathcal{S}_r$ and $\mathcal{\overline{S}}_r$ receivers as
\begin{equation}
\begin{aligned}
&\mathbf{\tilde{Y}}_{\mathcal{S}_r}&=\mathbf{H}_{\mathcal{S}_r}^{\mathcal{S}_t}\mathbf{X}_{\mathcal{S}_t}+\mathbf{Z}_{\mathcal{S}_r},\\
&\mathbf{\tilde{Y}}_{\mathcal{\overline{S}}_r}&=\mathbf{H}_{\mathcal{\overline{S}}_r}^{\mathcal{S}_t}\mathbf{X}_{\mathcal{S}_t}+\mathbf{Z}_{\mathcal{\overline{S}}_r},
\end{aligned}
\end{equation}
where receivers $j\in\mathcal{\overline{S}}_r$ are able to decode their intended messages $\lbrace W_{j,\mathcal{S}}:j\in\mathcal{\overline{S}}_r, \mathcal{S}\subseteq\mathcal{S}_t\rbrace$ from the received signal vector $\mathbf{\tilde{Y}}_{\mathcal{\overline{S}}_r}$, and receivers $j\in\mathcal{S}_r$ are able to decode their intended messages $\lbrace W_{j,\mathcal{S}}:j\in\mathcal{S}_r, \mathcal{S}\subseteq\mathcal{S}_t\rbrace$ from the received signal vector $\mathbf{\tilde{Y}}_{\mathcal{S}_r}$. Notice that the $\sigma T \times \sigma T$ submatrix channel $\mathbf{H}_{\mathcal{S}_r}^{\mathcal{S}_t}$ is invertable almost surely. Thus, by reducing noise at receivers $\mathcal{S}_r$ and multiplying the constructed signal $\mathbf{\tilde{Y}}_{\mathcal{S}_r}$ at receivers $\mathcal{S}_r$ by $\mathbf{H}_{\mathcal{\overline{S}}_r}^{\mathcal{S}_t}\left(\mathbf{H}_{\mathcal{S}_r}^{\mathcal{S}_t}\right)^{-1}$, we have 
\begin{equation}
\mathbf{\tilde{Y}}'_{\mathcal{S}_r}=\mathbf{H}_{\mathcal{\overline{S}}_r}^{\mathcal{S}_t}\mathbf{X}_{\mathcal{S}_t}+\mathbf{\tilde{Z}}_{\mathcal{\overline{S}}_r},
\end{equation}  
which is  a degraded version of $\mathbf{\tilde{Y}}_{\mathcal{\overline{S}}_r}$, where $\mathbf{\tilde{Z}}_{\mathcal{\overline{S}}_r}$ represents the reduced noise at receivers $\mathcal{S}_r$. Thus, receivers $\mathcal{S}_r$ can decode all messages $\mathcal{W}_{\mathcal{S}_t}$. Thus, by using Fano's inequality, we have
\begin{equation}
H\left(\mathcal{W}_{\mathcal{S}_t}|\mathbf{Y}_{\mathcal{S}_r},\mathcal{\overline{W}}\right)\leq H\left(\mathcal{W}_{\mathcal{S}_t}|\mathbf{Y}_{\mathcal{\overline{S}}_r},\mathcal{\overline{W}},\mathcal{W}_{\mathcal{S}_r}\right)\leq |\mathcal{W}_{\mathcal{S}_t}|T\epsilon.
\end{equation}
Notice that the applied assumptions (genie-aided information, cooperation between subset of receivers, reducing noise) cannot hurt the coding scheme. Thus, we have
\begin{IEEEeqnarray}{lll}~\label{Lower1}
H\left(\mathcal{W}_{\mathcal{S}_t},\mathcal{W}_{\mathcal{S}_r}\right)&=&\sum_{\substack{\mathcal{S}\subseteq\mathcal{S}_t\\|\mathcal{S}|=\tau}}\sum_{j\in\mathcal{\overline{S}}_r} R_{j,\mathcal{S}}^{*}T+\sum_{\substack{\mathcal{S}\subseteq\left[K_t\right]\\|\mathcal{S}|=\tau}}\sum_{j\in\mathcal{S}_r} R_{j,\mathcal{S}}^{*}T\nonumber\\
&\stackrel{\left(a\right)}{=}&H\left(\mathcal{W}_{\mathcal{S}_t},\mathcal{W}_{\mathcal{S}_r}|\mathcal{\overline{W}}\right)\nonumber\\
&\stackrel{\left(b\right)}{=}& I\left(\mathcal{W}_{\mathcal{S}_t},\mathcal{W}_{\mathcal{S}_r};\mathbf{Y}_{\mathcal{S}_r}|\mathcal{\overline{W}}\right)+H\left(\mathcal{W}_{\mathcal{S}_t},\mathcal{W}_{\mathcal{S}_r}|\mathbf{Y}_{\mathcal{S}_r},\mathcal{\overline{W}}\right)\\
&\stackrel{\left(c\right)}{\leq}& I\left(\mathbf{X}_{\left[K_T\right]};\mathbf{Y}_{\mathcal{S}_r}\right)+H\left(\mathcal{W}_{\mathcal{S}_t},\mathcal{W}_{\mathcal{S}_r}|\mathbf{Y}_{\mathcal{S}_r},\mathcal{\overline{W}}\right)\nonumber\\
&\stackrel{\left(d\right)}{\leq}& T\sigma\log\left(P\right)+H\left(\mathcal{W}_{\mathcal{S}_r}|\mathbf{Y}_{\mathcal{S}_r},\mathcal{\overline{W}}\right)+H\left(\mathcal{W}_{\mathcal{S}_t}|\mathbf{Y}_{\mathcal{S}_r},\mathcal{\overline{W}},\mathcal{W}_{\mathcal{S}_r}\right)\nonumber\\
&\stackrel{\left(e\right)}{\leq}& T\sigma\log\left(P\right)+|\mathcal{S}_r|T\epsilon+|\mathcal{S}_t|T\epsilon\nonumber
\end{IEEEeqnarray} 
where $\left(a\right)$ follows from the fact that the messages are independent. Step $\left(b\right)$ follows from the chain rule. Step $\left(c\right)$ follows from data processing inequality, where the signal $\mathbf{X}_{\left[K_T\right]}$ is a function of messages $\mathcal{W}_{\mathcal{S}_t}\bigcup\mathcal{W}_{\mathcal{S}_r}$. Step $\left(d\right)$ follows from the bound of the degrees of freedom of multiple access channel (MAC) with $K_T$ single-antenna transmitters and a receiver with $|\mathcal{S}_r|$ antennas. Finally, step $\left(e\right)$ follows from Fano's inequality. By diving on $T\log\left(P\right)$ and taking $P\to\infty$ and $\epsilon\to 0$, we get~\eqref{eqn11}.
\end{proof}

Repeating the inequality~\eqref{eqn11} for every subset $\mathcal{S}_r\subseteq\left[K_R\right]$ and every subset $\mathcal{S}_t\subseteq\left[K_T\right]$, or for simplicity consider a symmetric case where $\lbrace d_{j,\mathcal{S}}=d\rbrace$, we have
\begin{IEEEeqnarray}{lll} 
\mathsf{DoF}_{\sum}&\leq &\frac{\tau{K_T \choose \tau} K_R}{\tau {K_T \choose \tau}+\left(K_R-\sigma\right){\sigma-1 \choose \tau-1}}~\label{eqn12}
\end{IEEEeqnarray}  
To get the best tighter bound we optimize the bound in~\eqref{eqn12} on $\sigma$ such that $\tau\leq\sigma\leq\min\left\{K_T,K_R\right\}$. This completes the proof.
\vspace{-5mm}
\section{Proof of Lemma~\ref{Lem1}}~\label{App3}
For given receivers demand $\mathbf{d}$, we have $K_R{K_T\choose \tau}$ subfiles that need to be delivered to the receives. In~\eqref{eqn13}, we split each subfile into $\tau!\,\left(K_T-\tau\right)!$ smaller subfiles. Thus, a total of $K_R\,K_T!$ smaller subfiles needed to be delivered to all receivers. In Lemma~\ref{Lem1}, $K_R\,K_T! $ smaller subfiles are partitioned into groups such that each group of smaller subfiles is defined by a unique permutation $\pi\in\Pi_{\left[K_T\right]}^{\text{circ}}$ of total $\left(K_T-1\right)!$ groups. Furthermore inside each group, we have $ K_R K_T$ smaller subfiles. Accordingly, the total number of smaller subfiles in all groups is given by
\begin{equation}
 \left(K_T-1\right)!\, K_RK_T=K_R\,K_T! 
\end{equation}  
which is equal to the total number of smaller subfiles required to be delived, i.e., this partitioning strategy contains all smaller subfiles needed to be delivered. However, we should ensure that this groups contains disjoint smaller subfiles, and each smaller subfile is an actual smaller subfile needed to be delivered to maintain the correctness of the partitioning strategy.

From splitting strategy in~\eqref{eqn13}, we can see that each smaller subfile $W_{d_j,\pi\left[1:\tau\right],\pi\left[\tau+1:K_T\right]}^{\pi\left(1\right)}$, for $\pi\in\Pi_{\left[K_T\right]}$, $j\in\left[K_R\right]$, is required to be delivered to the receivers. Moreover, for a given circular permutation $\pi\in\Pi_{\left[K_T\right]}^{\text{circ}}$, if we circularly shift $\pi$ with $1\leq i< K_T$ times, it generates another permutation $\tilde{\pi}$, where $\tilde{\pi}\in\Pi_{\left[K_T\right]}$ but, $\tilde{\pi}\notin\Pi_{\left[K_T\right]}^{\text{circ}}$. Therefore, for a given circular permutation $\pi\in\Pi_{\left[K_T\right]}^{\text{circ}}$, smaller subfile $W_{d_j,\pi\left[i:i+\tau-1\right],\pi\left[i+\tau:K_T+i-1\right]}^{\pi\left(i\right)}$, for $i\in\left[K_T\right]$, $j\in\left[K_R\right]$, is an actual smaller subfiles required to be delivered to the receivers. Now, we prove that the groups are disjoint. To prove this, consider an arbitrary subfile $W_{d_j,\pi\left[i:i+\tau-1\right],\pi\left[i+\tau:K_T+i-1\right]}^{\pi\left(i\right)}$. We should ensure that this subfile does not appear at another group, where we prove this by contradiction. Assume that there exists a smaller subfile satisfying that  
\begin{equation}~\label{eqn19}
W_{d_j,\pi\left[i:i+\tau-1\right],\pi\left[i+\tau:K_T+i-1\right]}^{\pi\left(i\right)}\equiv W_{d_j,\tilde{\pi}\left[\tilde{i}:\tilde{i}+\tau-1\right],\tilde{\pi}\left[\tilde{i}+\tau:K_T+\tilde{i}-1\right]}^{\tilde{\pi}\left(\tilde{i}\right)}
\end{equation}
for $\tilde{\pi},\pi\in\Pi_{\left[K_T\right]}^{\text{circ}}$, $\tilde{\pi}\neq \mathcal{\pi}$, $1\leq i,\tilde{i}\leq K_T$. The condition in~\eqref{eqn19} is satisfied if and only if $\pi\left[i:K_T+i-1\right]=\tilde{\pi}\left[\tilde{i}:K_T+\tilde{i}-1\right]$. In other words, we can construct a circular permutation from another by circular shift process; however, this condition is not achievable. Therefore, each smaller subfile appears only at a single group, i.e., the groups are disjoint. Hence, the proof is completed. 
\vspace{-4mm}

\section{Full Rank of Matrices $\mathbf{V}_{j}^{\text{IA}}$ and $\mathbf{R}_k$}~\label{App4}
\vspace{-7mm}

In this Appendix, we prove that the matrices $\mathbf{V}_{k}^{\text{IA}}$, $\mathbf{V}_{k1}^{\text{IA}}$ have full column rank of $\left(n+1\right)^{\Gamma}$, $n^{\Gamma}$, respectively. Then, we prove that the matrix $\mathbf{R}_k$ has full rank of $\mu_n$. Before that, we present an important Lemma that is essential in the proofs.

\begin{lemma}~\label{Lem2} Let $\mathbf{H}$ be a $K_R\times K_T$ channel matrix whose entries are i.i.d. generated from a continuous distribution. For given $j\in\left[K_R\right]$, $\pi\in\Pi_{\left[K_T\right]}^{\text{circ}}$, and  $1\leq\tau\leq\min\left\{K_T,K_R\right\}$, any subset of monomials of $M_{\lbrace k\rbrace\cup\left[j+1:j+\tau-1\right]}^{\pi\left[i:i+\tau-1\right]}$ for $i\in\left[K_T\right]$, $k\in\left[j+\tau:K_R+j\right]$, are linearly independent.
\end{lemma}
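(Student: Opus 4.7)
The plan is to prove Lemma~\ref{Lem2} by a \emph{leading-monomial} argument in the polynomial ring $\mathbb{Q}[h_{r,s}]$. A general monomial in the determinants $M_{(k,i)}:=M_{\{k\}\cup[j+1:j+\tau-1]}^{\pi[i:i+\tau-1]}$ has the form
\[
P_\alpha \;=\; \prod_{k\in[j+\tau:K_R+j]}\prod_{i\in[K_T]} M_{(k,i)}^{\alpha_{k,i}},
\]
and the goal is to show that the $P_\alpha$ are $\mathbb{Q}$-linearly independent for distinct exponent vectors $\alpha=(\alpha_{k,i})$. I would fix a lexicographic term order on the variables $h_{r,s}$ that prioritises first the row index $r$ and then the column index $s$, both taken in decreasing order. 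Under this order, the leading term of any $\tau\times\tau$ determinant is obtained by the standard \emph{greedy matching}: pair the largest available row with the largest available column and iterate. Because $k\in[j+\tau:K_R+j]$ makes $k$ the top row, distinct from the common rows $[j{+}1:j{+}\tau{-}1]$, this yields an explicit monomial $\mathrm{LM}(M_{(k,i)})$ with leading coefficient $\pm 1$.

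Since the leading coefficients are units, $\mathrm{LM}(P_\alpha)=\prod_{k,i}\mathrm{LM}(M_{(k,i)})^{\alpha_{k,i}}$, and linear independence of the $P_\alpha$ reduces to injectivity of the map
\[
\alpha\;\longmapsto\;\textstyle\prod_{k,i}\mathrm{LM}(M_{(k,i)})^{\alpha_{k,i}},
\]
equivalently, to the $\mathbb{Z}$-linear independence of the exponent vectors of the $\mathrm{LM}(M_{(k,i)})$ inside $\mathbb{Z}^{K_R K_T}$. Two observations simplify this. First, the factor of $\mathrm{LM}(M_{(k,i)})$ attached to row $k$ involves no $h_{k',\cdot}$ for $k'\neq k$, so the independence problem decouples across distinct values of $k\in[j+\tau:K_R+j]$. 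Second, for each fixed~$k$ I would show that the $K_T$ leading monomials $\{\mathrm{LM}(M_{(k,i)})\}_{i\in[K_T]}$ are multiplicatively independent by writing down the incidence matrix between the row set $\{j{+}1,\dots,j{+}\tau{-}1,k\}$ and the sliding column windows $\pi[i:i{+}\tau{-}1]$, and verifying that it has full rank, for instance by exhibiting a triangular pattern after a suitable reordering that starts from the windows not affected by wrap-around.

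The main obstacle is the cyclic wrap-around of $\pi$. When a window $\pi[i:i{+}\tau{-}1]$ crosses the end of the permutation, several windows can share the same largest element, so the single distinguishing factor $h_{k,\cdot}$ no longer separates the different values of~$i$, and one must invoke all $\tau$ factors of $\mathrm{LM}(M_{(k,i)})$ jointly in order to obtain the rank statement. If the combinatorial rank computation becomes unwieldy, the fallback I would use is a Schwartz--Zippel-type specialisation: since the $h_{r,s}$ are drawn from a continuous distribution it suffices to exhibit one evaluation for which the $P_\alpha$ take linearly independent values. A Vandermonde-like substitution $h_{r,s}\leftarrow x_s^{\,r-1}$ turns each $M_{(k,i)}$ into a Schur-type determinant in the variables $x_{\pi(i)},\ldots,x_{\pi(i+\tau-1)}$ whose multi-graded structure in the $x_s$'s is explicit, and a degree-counting argument on the $x_s$'s then rules out any nontrivial relation $\sum_\alpha c_\alpha P_\alpha = 0$.
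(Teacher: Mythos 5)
Your leading-monomial strategy is a genuinely different route from the paper's. The paper's proof first separates total degrees, then asserts that a same-degree relation among the $P_\alpha$ can be factored into a product of linear forms in the $M_{(k,i)}$, thereby reducing the problem to linear independence of the minors themselves; that reduced statement is then established by cofactor expansion along row $k$ and an inductive descent on $\tau$. You avoid the factoring reduction entirely and attack monomial independence directly through leading terms. This is potentially cleaner, since the factoring step the paper invokes is not a general fact about homogeneous polynomials in more than two symbols; distinct leading monomials with unit leading coefficients immediately yield linear independence, with no detour.

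However, you leave the load-bearing step open, and it is not a cosmetic gap. The whole argument rests on the map $\alpha\mapsto\prod_{k,i}\mathrm{LM}\bigl(M_{(k,i)}\bigr)^{\alpha_{k,i}}$ being injective, i.e.\ on the $\mathbb{Z}$-linear independence of the exponent vectors of the leading monomials, and you only declare a plan (the incidence-matrix rank computation) without carrying it out, explicitly flagging the circular wrap-around as the sticking point. That is precisely where the difficulty lives: for a circular permutation, many of the $K_T$ sliding windows $\pi[i:i+\tau-1]$ can share the same maximal column under your term order, so the single row-$k$ factor of $\mathrm{LM}(M_{(k,i)})$ does not distinguish $i$, and the argument must genuinely engage all $\tau$ factors. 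Related to this, the claimed ``decoupling across $k$'' is overstated: only the one factor $h_{k,\cdot}$ in $\mathrm{LM}(M_{(k,i)})$ is exclusive to row $k$; the remaining $\tau-1$ factors come from the shared rows $[j+1:j+\tau-1]$ and are common to all $k$, so the projection onto row-$k$ variables gives you $\sum_i\alpha_{k,i}$ but not each $\alpha_{k,i}$ when wrap-around collisions occur. Finally, the Schwartz--Zippel/Vandermonde fallback is sketched too thinly to assess; the multi-graded degree count for products of Schur-type determinants over \emph{overlapping} variable windows is not obviously injective, and verifying it would require essentially the same combinatorial rank argument you are trying to bypass. As it stands, the proposal sets up a promising framework but does not close the lemma.
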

\begin{proof}
Note that monomials of \small$\lbrace M_{\lbrace k\rbrace\cup\left[j+1:j+\tau-1\right]}^{\pi\left[i:i+\tau-1\right]}\rbrace_{i=1}^{K_T}$ \normalsize are linearly independent of monomials of \small$\lbrace M_{\lbrace l\rbrace\cup\left[j+1:j+\tau-1\right]}^{\pi\left[i:i+\tau-1\right]}\rbrace_{i=1}^{K_T}$ \normalsize for $l\in\left[j+\tau:K_R+j\right]$, $l\neq k$, since the channel coefficients $\lbrace h_{k\pi\left(i\right)}\rbrace_{i=1}^{K_T}$ only contribute in \small$\lbrace M_{\lbrace k\rbrace\cup\left[j+1:j+\tau-1\right]}^{\pi\left[i:i+\tau-1\right]}\rbrace_{i=1}^{K_T}$ \normalsize and the channels are i.i.d. variables. Thus, it is only required to prove the linearly independence of monomials of \small$\lbrace M_{\lbrace k\rbrace\cup\left[j+1:j+\tau-1\right]}^{\pi\left[i:i+\tau-1\right]}\rbrace_{i=1}^{K_T}$. \normalsize Observe that each minor \small$ M_{\lbrace k\rbrace\cup\left[j+1:j+\tau-1\right]}^{\pi\left[i:i+\tau-1\right]}$ \normalsize is a polynomial of degree $\tau$. Therefore, monomials of \small$\lbrace M_{\lbrace k\rbrace\cup\left[j+1:j+\tau-1\right]}^{\pi\left[i:i+\tau-1\right]}\rbrace_{i=1}^{K_T}$ \normalsize with different degrees are linearly independent due to the different exponents of the channel coefficients. This implies that we only need to prove the linearly independence of monomials of \small$ M_{\lbrace k\rbrace\cup\left[j+1:j+\tau-1\right]}^{\pi\left[i:i+\tau-1\right]}$ \normalsize with the same degree. Consider a linear combination of a set of $N$ monomials of degree $L$. 
\small \begin{equation}~\label{eqn24}
\sum_{n=1}^{N} c_n \prod_{i=1}^{K_T}\left(M_{\lbrace k\rbrace\cup\left[j+1:j+\tau-1\right]}^{\pi\left[i:i+\tau-1\right]}\right)^{\alpha_i\left(n\right)}=0
\end{equation}
\normalsize where $\sum_{i=1}^{K_T}\alpha_i\left(n\right)=L$, and $\mathbf{c}=\left[c_1,\ldots,c_N\right]\in\mathbb{C}^{N}$. If these monomials are not linearly independent, then, there exists $\mathbf{c}\neq \mathbf{0}$ satisfying~\eqref{eqn24}. Note that the linear combination in~\eqref{eqn24} can be factored into multiplied terms, where each term is a linear combination of \small $\lbrace M_{\lbrace k\rbrace\cup\left[j+1:j+\tau-1\right]}^{\pi\left[i:i+\tau-1\right]}\rbrace_{i=1}^{K_T}$. \normalsize As a result monomials of \small $\lbrace M_{\lbrace k\rbrace\cup\left[j+1:j+\tau-1\right]}^{\pi\left[i:i+\tau-1\right]}\rbrace_{i=1}^{K_T}$ \normalsize of the same degree are linearly independent if and only if the minors \small $\lbrace M_{\lbrace k\rbrace\cup\left[j+1:j+\tau-1\right]}^{\pi\left[i:i+\tau-1\right]}\rbrace_{i=1}^{K_T}$ \normalsize are linearly independent. Using cofactor expansion, we represent \small $M_{\lbrace k\rbrace\cup\left[j+1:j+\tau-1\right]}^{\pi\left[i:i+\tau-1\right]}=\sum_{l=i}^{i+\tau-1}h_{k\pi\left(l\right)}C_{\left[j+1:j+\tau-1\right]}^{\pi\left[i:i+\tau-1\right]\setminus \pi\left(l\right)}$, \normalsize where \small $C_{\left[j+1:j+\tau-1\right]}^{\pi\left[i:i+\tau-1\right]\setminus \pi\left(l\right)}$ \normalsize is the cofactor of \small$h_{k\pi\left(l\right)}$. \normalsize Thus, the linear combination of these minors is given by
\small \begin{equation}~\label{eqn25}
\sum_{i=1}^{K_T}c_i M_{\lbrace k\rbrace\cup\left[j+1:j+\tau-1\right]}^{\pi\left[i:i+\tau-1\right]}=\sum_{l=1}^{K_T}h_{k\pi\left(l\right)}\left(\sum_{i=l-\tau+1}^{l} c_i C_{\left[j+1:j+\tau-1\right]}^{\pi\left[i:i+\tau-1\right]\setminus \pi\left(l\right)}\right)=0
\end{equation}
\normalsize Since \small $\lbrace h_{k\pi\left(l\right)}\rbrace_{l=1}^{K_T}$ \normalsize are i.i.d. variables,~\eqref{eqn25} is satisfied if and only if \small $\sum_{i=l-\tau+1}^{l} c_i C_{\left[j+1:j+\tau-1\right]}^{\pi\left[i:i+\tau-1\right]\setminus \pi\left(l\right)}=0$ \normalsize for every $l\in\left[K_T\right]$. Note that if $\tau=2$, then \small $\lbrace C_{\left[j+1:j+\tau-1\right]}^{\pi\left[i:i+\tau-1\right]\setminus \pi\left(l\right)}\rbrace_{i=l-\tau+1}^{l}$ \normalsize are i.i.d. channel coefficients. Otherwise, we can use cofactor expansion to represent \small $\lbrace C_{\left[j+1:j+\tau-1\right]}^{\pi\left[i:i+\tau-1\right]\setminus \pi\left(l\right)}\rbrace_{i=l-\tau+1}^{l}$ \normalsize. Then, the same argument can be iteratively applied until we reach to a linear combination of some channels that are i.i.d. random variables. Therefore, we reach to the result that $\lbrace c_i\rbrace_{i=1}^{K_T}$ must be zero to satisfy~\eqref{eqn25}, i.e., \small $\lbrace M_{\lbrace k\rbrace\cup\left[j+1:j+\tau-1\right]}^{\pi\left[i:i+\tau-1\right]}\rbrace_{i=1}^{K_T}$ \normalsize are linearly independent. The details are omitted here due to the space limitations.
\end{proof}
   
Consider first the $\mu_n\times \left(n+1\right)^\Gamma$ matrix $\mathbf{V}_j^{\text{IA}}$ given in~\eqref{eqn20}. At any row $u\in\left[\mu_n\right]$, the entries of different columns are different monomials in \small$\left\{M_{\lbrace k\rbrace\cup\left[j+1:j+\tau-1\right]}^{\pi\left[i:i+\tau-1\right]}\left(u\right)\right\}$ \normalsize for \small $i\in\left[K_T\right]$, $k\in\left[j+\tau:K_R+j-1\right]$. \normalsize In other words, \small $\left[\alpha_{j}^{\left[k,i\right]}\left(l\right)\right]_{k\in\left[j+\tau:K_R+j-1\right]}^{i\in\left[K_T\right]}\neq\left[\alpha_{j}^{\left[k,i\right]}\left(l'\right)\right]_{k\in\left[j+\tau:K_R+j-1\right]}^{i\in\left[K_T\right]}$ \normalsize for \small $\left(l,l'\right)\in\left[\left(n+1\right)^\Gamma\right]$ \normalsize with $l\neq l'$. Furthermore, the exponents of monomials at the same column with different rows are equal. Thus, using~\cite[Lemma~$3$]{annapureddy2012degrees}, the matrix $\mathbf{V}_{j}^{\text{IA}}$ has full rank almost surely if and only if \small$\left\{M_{\lbrace k\rbrace\cup\left[j+1:j+\tau-1\right]}^{\pi\left[i:i+\tau-1\right]}\right\}$ \normalsize for  $i\in\left[K_T\right]$, $k\in\left[j+\tau:K_R+j-1\right]$, are linearly independent. Using Lemma~\ref{Lem2}, matrix $\mathbf{V}_{j}^{\text{IA}}$ has full rank of $\left(n+1\right)^\Gamma$. Similarly, we can prove that matrix $\mathbf{V}_{j1}^{\text{IA}}$ has full rank of $n^{\Gamma}$, where it is designed in the same manner as matrix $\mathbf{V}_{j}^{\text{IA}}$. Now, consider matrix $\mathbf{R}_{k}$ given in~\eqref{eqn22}. Using~\cite[Lemma~$3$]{annapureddy2012degrees}, matrix $\mathbf{R}_k$ is full rank if and only if there does not exist an annihilating polynomial satisfying

\small
\begin{equation}\label{eqn23}
\sum_{i=1}^{K_T}\sum_{l=1}^{n^\Gamma} c_{kil} M_{\left[k:k+\tau-1\right]}^{\pi\left[i:i+\tau-1\right]} V_{k1}^{\text{IA}}\left(l\right)+\sum_{j=k+1}^{K_R+k-\tau}\sum_{l=1}^{\left(n+1\right)^\Gamma} c_{jl} V_{j}^{\text{IA}}\left(l\right)=0,
\end{equation}
\normalsize where $\lbrace c_{kil}\rbrace$ and $\lbrace c_{jl}\rbrace$ are non-zero factors. $V_{k1}^{\text{IA}}\left(l\right)$ is the entry in the $l$-th column and an arbitrary row of matrix $\mathbf{V}_{k1}^{\text{IA}}$, and $V_{j}^{\text{IA}}\left(l\right)$ is the entry in the $l$-th column and an arbitrary row of matrix $\mathbf{V}_{j}^{\text{IA}}$. Note that the variable $a_{j}$ appears with exponent one in the monomials $\lbrace V_{j}^{\text{IA}}\left(l\right)\rbrace_{l=1}^{\left(n+1\right)^\Gamma}$ and does not appear in the other monomials. Therefore, these monomials are linearly independent of other monomials in~\eqref{eqn23}, since $\lbrace a_j\rbrace_{j=1}^{K_R}$ are i.i.d. random variables. As a result, we need only to prove that the monomials of the first term in~\eqref{eqn23} are linearly independent. Observe that the first term is monomials in \small $\lbrace M_{\lbrace j\rbrace\cup\left[k+1:k+\tau-1\right]}^{\left[i:i+\tau-1\right]}\rbrace$ \normalsize for $i\in\left[K_T\right]$, $j\in\left[k+\tau:K_R+k\right]$. Using Lemma~\ref{Lem2}, the monomials of these minors are linearly independent. As a result matrix $\mathbb{R}_k$ has full rank of $\mu_n$.


\begin{thebibliography}{10}
\providecommand{\url}[1]{#1}
\csname url@samestyle\endcsname
\providecommand{\newblock}{\relax}
\providecommand{\bibinfo}[2]{#2}
\providecommand{\BIBentrySTDinterwordspacing}{\spaceskip=0pt\relax}
\providecommand{\BIBentryALTinterwordstretchfactor}{4}
\providecommand{\BIBentryALTinterwordspacing}{\spaceskip=\fontdimen2\font plus
\BIBentryALTinterwordstretchfactor\fontdimen3\font minus
  \fontdimen4\font\relax}
\providecommand{\BIBforeignlanguage}[2]{{%
\expandafter\ifx\csname l@#1\endcsname\relax
\typeout{** WARNING: IEEEtran.bst: No hyphenation pattern has been}%
\typeout{** loaded for the language `#1'. Using the pattern for}%
\typeout{** the default language instead.}%
\else
\language=\csname l@#1\endcsname
\fi
#2}}
\providecommand{\BIBdecl}{\relax}
\BIBdecl

\bibitem{misc}
\BIBentryALTinterwordspacing
Cisco, ``Cisco visual networking index: Global mobile data traffic forecast
  white paper,'' Mar. 2017. [Online]. Available: \url{http://goo.gl/OL6mYY}
\BIBentrySTDinterwordspacing

\bibitem{el2010proactive}
H.~El~Gamal, J.~Tadrous, and A.~Eryilmaz, ``Proactive resource allocation:
  Turning predictable behavior into spectral gain,'' in \emph{Forty-Eighth
  Annual Allerton Conference}, 2010, pp. 427--434.

\bibitem{tadrous2016optimal}
J.~Tadrous and A.~Eryilmaz, ``On optimal proactive caching for mobile networks
  with demand uncertainties,'' \emph{IEEE/ACM Trans. Netwo.}, vol.~24, no.~5,
  pp. 2715--2727, 2016.

\bibitem{golrezaei2012femtocaching}
N.~Golrezaei, K.~Shanmugam, A.~G. Dimakis, A.~F. Molisch, and G.~Caire,
  ``Femtocaching: Wireless video content delivery through distributed caching
  helpers,'' in \emph{in Proc. IEEE INFOCOM}, Mar. 2012, pp. 1107--1115.

\bibitem{gregori2015joint}
M.~Gregori, J.~G{\'o}mez-Vilardeb{\`o}, J.~Matamoros, and D.~G{\"u}nd{\"u}z,
  ``Joint transmission and caching policy design for energy minimization in the
  wireless backhaul link,'' in \emph{in Proc. IEEE ISIT}, 2015, pp. 1004--1008.

\bibitem{girgis2016proactive}
A.~M. Girgis, A.~El-Keyi, M.~Nafie, and R.~Gohary, ``Proactive location-based
  scheduling of delay-constrained traffic over fading channels,'' in \emph{IEEE
  84th VTC-Fall}, Sept 2016, pp. 1--6.

\bibitem{liu2016energy}
D.~Liu and C.~Yang, ``Energy efficiency of downlink networks with caching at
  base stations,'' \emph{IEEE J. Sel. Areas Commun.}, vol.~34, no.~4, pp.
  907--922, April 2016.

\bibitem{maddah2014fundamental}
M.~A. Maddah-Ali and U.~Niesen, ``Fundamental limits of caching,'' \emph{IEEE
  Trans. Inf. Theory}, vol.~60, no.~5, pp. 2856--2867, 2014.

\bibitem{yu2017exact}
Q.~Yu, M.~A. Maddah-Ali, and A.~S. Avestimehr, ``The exact rate-memory tradeoff
  for caching with uncoded prefetching,'' in \emph{in Proc. IEEE ISIT}, 2017,
  pp. 1613--1617.

\bibitem{maddah2015decentralized}
M.~A. Maddah-Ali and U.~Niesen, ``Decentralized coded caching attains
  order-optimal memory-rate tradeoff,'' \emph{IEEE/ACM Trans. Netwo.}, vol.~23,
  no.~4, pp. 1029--1040, 2015.

\bibitem{niesen2017coded}
U.~Niesen and M.~A. Maddah-Ali, ``Coded caching with nonuniform demands,''
  \emph{IEEE Trans. Inf. Theory}, vol.~63, no.~2, pp. 1146--1158, 2017.

\bibitem{ji2014average}
M.~Ji, A.~M. Tulino, J.~Llorca, and G.~Caire, ``On the average performance of
  caching and coded multicasting with random demands,'' in \emph{in Proc. IEEE
  ISWCS}, Aug. 2014, pp. 922--926.

\bibitem{zhang2015coded}
J.~Zhang, X.~Lin, and X.~Wang, ``Coded caching under arbitrary popularity
  distributions,'' in \emph{in Proc. ITA}, Feb. 2015.

\bibitem{wang2015coded}
S.~Wang, W.~Li, X.~Tian, and H.~Liu, ``Coded caching with heterogenous cache
  sizes,'' \emph{arXiv preprint arXiv:1504.01123}, 2015.

\bibitem{amiri2016decentralized}
M.~M. Amiri, Q.~Yang, and D.~G{\"u}nd{\"u}z, ``Decentralized coded caching with
  distinct cache capacities,'' in \emph{in 50th Asilomar Conf. Signals, Systems
  and Computers}, Nov. 2016, pp. 734--738.

\bibitem{pedarsani2016online}
R.~Pedarsani, M.~A. Maddah-Ali, and U.~Niesen, ``Online coded caching,''
  \emph{IEEE/ACM Trans. Netwo.}, vol.~24, no.~2, pp. 836--845, 2016.

\bibitem{shariatpanahi2016multi}
S.~P. Shariatpanahi, S.~A. Motahari, and B.~H. Khalaj, ``Multi-server coded
  caching,'' \emph{IEEE Trans. Inf. Theory}, vol.~62, no.~12, pp. 7253--7271,
  2016.

\bibitem{karamchandani2016hierarchical}
N.~Karamchandani, U.~Niesen, M.~A. Maddah-Ali, and S.~N. Diggavi,
  ``Hierarchical coded caching,'' \emph{IEEE Trans. Inf. Theory}, vol.~62,
  no.~6, pp. 3212--3229, 2016.

\bibitem{maddah2015cache}
M.~A. Maddah-Ali and U.~Niesen, ``Cache-aided interference channels,'' in
  \emph{in Proc. IEEE ISIT}, 2015, pp. 809--813.

\bibitem{sengupta2016cache}
A.~Sengupta, R.~Tandon, and O.~Simeone, ``Cache aided wireless networks:
  Tradeoffs between storage and latency,'' in \emph{in Proc. Annu. Conf. Info.
  Sci. Syst.}, Mar 2016, pp. 320--325.

\bibitem{tandon2016cloud}
R.~Tandon and O.~Simeone, ``Cloud-aided wireless networks with edge caching:
  Fundamental latency trade-offs in fog radio access networks,'' in \emph{in
  Proc. IEEE ISIT}, 2016, pp. 2029--2033.

\bibitem{sengupta2016cloud}
A.~Sengupta, R.~Tandon, and O.~Simeone, ``Cloud and cache-aided wireless
  networks: Fundamental latency trade-offs,'' \emph{arXiv preprint
  arXiv:1605.01690}, 2016.

\bibitem{azimi2016fundamental}
S.~M. Azimi, O.~Simeone, and R.~Tandon, ``Fundamental limits on latency in
  small-cell caching systems: An information-theoretic analysis,'' in \emph{in
  Proc. IEEE GLOBECOM}, Dec. 2016, pp. 1--6.

\bibitem{girgis2017decentralized}
A.~M. Girgis, O.~Ercetin, M.~Nafie, and T.~ElBatt, ``Decentralized coded
  caching in wireless networks: Trade-off between storage and latency,'' in
  \emph{in Proc. IEEE ISIT}, June 2017, pp. 2443--2447.

\bibitem{naderializadeh2017fundamental}
N.~Naderializadeh, M.~A. Maddah-Ali, and A.~S. Avestimehr, ``Fundamental limits
  of cache-aided interference management,'' \emph{IEEE Trans. Inf. Theory},
  vol.~63, no.~5, pp. 3092--3107, 2017.

\bibitem{hachem2016degrees}
J.~Hachem, U.~Niesen, and S.~Diggavi, ``Degrees of freedom of cache-aided
  wireless interference networks,'' \emph{arXiv preprint arXiv:1606.03175},
  2016.

\bibitem{xu2017fundamental}
F.~Xu, M.~Tao, and K.~Liu, ``Fundamental tradeoff between storage and latency
  in cache-aided wireless interference networks,'' \emph{IEEE Trans. Inf.
  Theory}, vol.~63, no.~11, pp. 7464--7491, Nov 2017.

\bibitem{lapidoth2007cognitive}
A.~Lapidoth, S.~Shamai, and M.~A. Wigger, ``On cognitive interference
  networks,'' in \emph{IEEE ITW}, Sept. 2007, pp. 325--330.

\bibitem{annapureddy2012degrees}
V.~S. Annapureddy, A.~El~Gamal, and V.~V. Veeravalli, ``Degrees of freedom of
  interference channels with comp transmission and reception,'' \emph{IEEE
  Trans. Inf. Theory}, vol.~58, no.~9, pp. 5740--5760, 2012.

\bibitem{huang2009degrees}
C.~Huang and S.~A. Jafar, ``Degrees of freedom of the mimo interference channel
  with cooperation and cognition,'' \emph{IEEE Trans. Inf. Theory}, vol.~55,
  no.~9, pp. 4211--4220, 2009.

\bibitem{jafar2008degrees}
S.~A. Jafar and S.~Shamai, ``Degrees of freedom region of the mimo $ x $
  channel,'' \emph{IEEE Trans. Inf. Theory}, vol.~54, no.~1, pp. 151--170,
  2008.

\bibitem{cadambe2009interference}
V.~R. Cadambe and S.~A. Jafar, ``Interference alignment and the degrees of
  freedom of wireless $ x $ networks,'' \emph{IEEE Trans. Inf. Theory},
  vol.~55, no.~9, pp. 3893--3908, 2009.

\bibitem{weingarten2006capacity}
H.~Weingarten, Y.~Steinberg, and S.~S. Shamai, ``The capacity region of the
  gaussian multiple-input multiple-output broadcast channel,'' \emph{IEEE
  Trans. Inf. Theory}, vol.~52, no.~9, pp. 3936--3964, 2006.

\bibitem{sun2012degrees}
H.~Sun, C.~Geng, T.~Gou, and S.~A. Jafar, ``Degrees of freedom of mimo x
  networks: Spatial scale invariance, one-sided decomposability and linear
  feasibility,'' in \emph{in Proc. IEEE ISIT}, 2012, pp. 2082--2086.

\bibitem{zheng2003diversity}
L.~Zheng and D.~N.~C. Tse, ``Diversity and multiplexing: A fundamental tradeoff
  in multiple-antenna channels,'' \emph{IEEE Trans. Inf. theory}, vol.~49,
  no.~5, pp. 1073--1096, 2003.

\end{thebibliography}
\end{document}